\DeclareMathOperator{\mad}{mad}
\DeclareMathOperator{\ad}{ad}
\DeclareMathOperator{\D}{\Delta(G)}
\newtheorem{lemma}{Lemma}
\newtheorem{theorem}{Theorem}
\newtheorem{claim}{Claim}
\newtheorem{conjecture}{Conjecture}
\newtheorem{question}{Question}
\def\claimb{$$\vcenter\bgroup\advance\hsize by -8em\noindent
\refstepcounter{claimb}\ignorespaces\it}        
\def\endclaimb{\rm\egroup\leqno(\theclaim)$$\global\@ignoretrue}
\noindent \emph{Proof.} {}{#1}{}}{\hfill
\begin{document}

\title{List coloring the square of sparse graphs with large degree\thanks{This work was partially supported by the ANR grant EGOS 12 JS02 002 01}}

\author{Marthe Bonamy, Benjamin Lévêque, Alexandre Pinlou\thanks{Second affiliation: D\'epartement MIAp, Universit\'e Paul-Val\'ery, Montpellier 3}\\ \normalsize{LIRMM, Universit\'e Montpellier 2, CNRS}\\ \small{\{marthe.bonamy, benjamin.leveque, alexandre.pinlou\}@lirmm.fr}}

\maketitle

\begin{abstract}
We consider the problem of coloring the squares of graphs of bounded maximum average degree, that is, the problem of coloring the vertices while ensuring that two vertices that are adjacent or have a common neighbour receive different colors. 

Borodin et al. proved in 2004 and 2008 that the squares of planar graphs of girth at least seven and sufficiently large maximum degree $\Delta$ are list $(\Delta+1)$-colorable, while the squares of some planar graphs of girth six and arbitrarily large maximum degree are not. By Euler's Formula, planar graphs of girth at least $6$ are of maximum average degree less than $3$, and planar graphs of girth at least $7$ are of maximum average degree less than $\frac{14}{5}<3$.

We strengthen their result and prove that there exists a function $f$ such that the square of any graph with maximum average degree $m<3$ and maximum degree $\Delta\geq f(m)$ is list $(\Delta+1)$-colorable. This bound of $3$ is optimal in the sense that the above-mentioned planar graphs with girth $6$ have maximum average degree less than $3$ and arbitrarily large maximum degree, while their square cannot be $(\Delta+1)$-colored. The same holds for list injective $\Delta$-coloring.
\end{abstract} 

\section{Introduction}

The square of a graph $G$ is defined as a graph with the same set of vertices as $G$, where two vertices are adjacent if and only if they are adjacent or have a common neighbor in $G$. A \emph{k-coloring} of the square of a graph $G$ (also known as $2$-distance $k$-coloring of $G$) is therefore a coloring of the vertices of $G$ with $k$ colors such that two vertices that are adjacent or have a common neighbor receive distinct colors. We define $\chi^2(G)$ as the smallest $k$ such that the square of $G$ admits a $k$-coloring. For example, the square of a cycle of length $5$ cannot be colored with less than $5$ colors as any two vertices are either adjacent or have a common neighbor: its square is the clique of size $5$.

The study of $\chi^2(G)$ on planar graphs was initiated by Wegner in 1977~\cite{w77}, and has been actively studied since. The \emph{maximum degree} of a graph $G$ is denoted $\D$. Note that any graph $G$ satisfies $\chi^2(G) \geq \D+1$. Indeed, if we consider a vertex of maximal degree and its neighbors, they form a set of $\D+1$ vertices, any two of which are adjacent or have a common neighbor. Hence at least $\D+1$ colors are needed to color the square of $G$. It is therefore natural to ask when this lower bound is reached. For that purpose, we can study, as suggested by Wang and Lih~\cite{wl03}, what conditions on the sparseness of the graph can be sufficient to ensure the equality holds. The sparseness of a planar graph can for example be measured by its girth. The \emph{girth} of a graph $G$, denoted $g(G)$, is the length of a shortest cycle.

\begin{conjecture}[Wang and Lih~\cite{wl03}]\label{conj:wl03}
For any integer $k \geq 5$, there exists an integer $M(k)$ such that for every planar graph $G$ satisfying $g(G) \geq k$ and $\D \geq M(k)$, $\chi^2(G)=\D+1$.
\end{conjecture}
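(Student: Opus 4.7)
The plan is to split the conjecture by the girth parameter $k$. For $k \geq 6$, Euler's formula yields $\mad(G) < 3$ for every planar graph $G$ with $g(G) \geq 6$, so the main theorem of this paper applies as soon as $\Delta(G)$ is sufficiently large (in terms of $\mad(G)$, and hence in terms of $k$). Combining its list $(\Delta+1)$-coloring of the square with the trivial lower bound $\chi^2(G) \geq \Delta(G)+1$ yields equality, dispatching the entire range $k \geq 6$ without additional ingredients.

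For the genuinely remaining case $k = 5$, the $\mad$-approach is insufficient in principle: Euler's formula only gives $\mad(G) < 10/3$ for planar graphs of girth $5$, and the authors observe that $\mad < 3$ is the sharp threshold for any statement of this shape. I would therefore tackle $k = 5$ by a dedicated planar discharging argument specific to girth $5$. The setup is to assign initial charges $\mu(v) = d(v) - 4$ on vertices and $\mu(f) = 2\ell(f) - 4$ on faces (total $-8$ by Euler), exhibit a list of \emph{reducible} configurations for a minimum counterexample $G$ to $\chi^2(G) = \Delta(G)+1$ with $g(G) \geq 5$ and $\Delta(G) \geq M(5)$, and then design local discharging rules forcing at least one such configuration to appear. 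Reducibility itself would follow the standard scheme: uncolor a small substructure, bound the number of colors forbidden at each uncolored vertex using that neighbours at square-distance $\leq 2$ are limited, and complete via a greedy pass or a Hall/SDR matching argument, heavily exploiting that $\Delta \geq M(5)$ is enormous compared to the bounded degrees of the light vertices in a reducible configuration.

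The main obstacle is unquestionably the reducibility analysis at girth $5$. A $5$-face allows two low-degree vertices to share multiple square-neighbours, so the ``uncolor a small set and recolor greedily'' argument loses the slack available at girth $6$, and the configurations one must rule out are intrinsically more intricate: one must track not only the degrees of vertices on a $5$-face but also which pairs of their outside neighbours coincide in the square. Moreover a $5$-face contributes only $\mu(f) = 6$ of initial charge, while at girth $6$ faces carry charge $8$, so the discharging rules must extract charge very tightly from the few high-degree vertices bordering each $5$-face. Producing an unavoidable family of reducible configurations at $k=5$ is recognized as the central open difficulty in this area; I expect it to require fundamentally new reducible subgraphs tailored to the geometry of $5$-faces, together with an argument that high-degree vertices cannot be ``hidden'' from the neighbourhood of a light cluster without producing such a subgraph. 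If this reducibility-versus-discharging balance can be achieved at girth $5$, the rest of the conjecture (namely $k\geq 6$) is already delivered by the main theorem of the paper.
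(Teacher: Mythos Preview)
The statement you are attempting to prove is a \emph{conjecture} that the paper explicitly reports as \emph{false} for $k=5$ and $k=6$: immediately after stating it, the authors write that it ``was proved \ldots\ to be true for $k\geq 7$ and false for $k=6$,'' and they exhibit the family $G_p$ (Figure~\ref{fig:mad3}) of planar graphs with girth $6$, $\Delta(G_p)=p$, and $\chi^2(G_p)=\Delta(G_p)+2$. These same graphs have girth $\geq 5$, so they kill the $k=5$ case as well. Your entire programme for $k=5$ is therefore attempting to prove a false statement; no refinement of the reducibility/discharging scheme can succeed.

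Your argument for $k=6$ also has a concrete gap independent of the above. Euler's formula at girth $6$ gives only the \emph{non-uniform} bound $\mad(G)<3$; there is no $\epsilon>0$ with $\mad(G)\leq 3-\epsilon$ valid for all planar $G$ of girth $\geq 6$ (indeed $\mad(G_p)=3-\tfrac{5}{2p+1}\to 3$). Theorem~\ref{thm:m145} requires $\Delta(G)\geq f(\epsilon)$ for an $\epsilon$ with $\mad(G)<3-\epsilon$, so the threshold $f(\epsilon)$ blows up as $\mad\to 3$ and you cannot extract a single $M(6)$. Your reasoning is sound only for $k\geq 7$, where Euler yields $\mad(G)<\tfrac{14}{5}=3-\tfrac{1}{5}$ uniformly and one may take $M(k)=f(\tfrac{1}{5})$; this is exactly the range in which the conjecture is known to hold.
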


Conjecture~\ref{conj:wl03} was proved in \cite{bgint04,bin04,dkns08} to be true for $k \geq 7$ and false for $k=6$. An extension of the $k$-coloring of the square is the \emph{list $k$-coloring of the square}, where instead of having the same set of $k$ colors for the whole graph, every vertex is assigned some set of $k$ colors and has to be colored from it. Given a graph $G$, we call $\chi^2_\ell(G)$ the minimal integer $k$ such that the square of $G$ admits a list $k$-coloring for any list assignment. Obviously,  coloring is a subcase of list coloring (where the same color list is assigned to every vertex), so for any graph $G$, we have $\chi^2_\ell(G) \geq \chi^2(G)$. Thus, in the case of list-coloring, Conjecture~\ref{conj:wl03} is also false for $k=6$, and Borodin, Ivanova and Neustroeva~\cite{bin08} proved it to be true for $k \geq 7$.

Another way to measure the sparseness of a graph is through its maximum average degree as defined below. The average degree of a graph $G$, denoted $\ad(G)$, is $\frac{\sum_{v \in V}d(v)}{|V|}=\frac{2|E|}{|V|}$. The \emph{maximum average degree} of a graph $G$, denoted $\mad(G)$, is the maximum of $\ad(H)$ on every subgraph $H$ of $G$. Euler's formula links girth and maximum average degree in the case of planar graphs.

\begin{lemma}[Folklore]\label{lem:euler}
For every planar graph $G$, $(\mad(G)-2)(g(G)-2)<4$.
\end{lemma}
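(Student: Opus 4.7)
The plan is to apply Euler's formula to a subgraph of $G$ that realizes the maximum average degree. Since $G$ is finite, $\mad(G)$ is attained on some subgraph $H$; by further restricting to a connected component of $H$ of largest average degree, I would assume $H$ is connected and satisfies $\ad(H)=\mad(G)$. As a subgraph of a planar graph, $H$ is itself planar, and since every cycle of $H$ is a cycle of $G$, its girth satisfies $g(H)\geq g(G)$.

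If $H$ is acyclic then $\ad(H)<2$ and the inequality holds trivially, so I would assume $H$ contains a cycle. Fix a planar embedding of $H$ and let $n$, $m$, $f$ denote its numbers of vertices, edges, and faces. Euler's formula gives $n-m+f=2$. Since every face is bounded by at least $g(H)\geq g(G)$ edges and every edge lies on the boundary of at most two faces, summing the face lengths yields $g(G)\cdot f \leq 2m$.

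Combining the two relations gives $n-m+2m/g(G)\geq 2$, which rearranges to
\[
\ad(H)\;=\;\frac{2m}{n}\;<\;\frac{2\,g(G)}{g(G)-2},
\]
the strict inequality coming from $n-2<n$. Subtracting $2$ and multiplying by the positive quantity $g(G)-2$ yields $(\ad(H)-2)(g(G)-2)<4$, and since $\ad(H)=\mad(G)$ this is exactly the claim.

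There is no real obstacle here: the proof is essentially mechanical once one reduces to a connected subgraph where Euler's formula applies cleanly. The only points requiring a moment's care are preserving the strict inequality through the rearrangement, and justifying the reduction to a connected subgraph (which is legitimate because $\ad$ on the whole of $H$ is a weighted average of $\ad$ on its components, so one of them does at least as well).
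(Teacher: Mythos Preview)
Your argument is correct and is the standard proof of this folklore inequality; the paper itself states the lemma without proof, so there is nothing to compare against. One minor point worth tightening: the claim that every face of the embedding of $H$ has length at least $g(H)$ is where the assumption ``$H$ contains a cycle'' is actually used (so that $f\geq 2$ and each face boundary walk contains a cycle); this is standard but is the only step that is not purely mechanical.
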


The question raised by Conjecture~\ref{conj:wl03} and now solved could be reworded as follows: what is the minimum $k$ such that any graph $G$ with $g(G) \geq k$ and large enough $\Delta(G)$ (depending only on $g(G)$) satisfies
$\chi_\ell^2=\D+1$? A consequence of Lemma~\ref{lem:euler} is that we can transpose any theorem holding for an upper bound on $\mad(G)$ into a theorem holding for planar graphs with lower-bounded girth. It is then natural to transpose the question to the maximum average degree, as it is a more refined measure of sparseness. More precisely, what is the supremum $M$ such that any graph $G$ with $\mad(G)<M$ and large enough $\Delta(G)$ (depending only on $\mad(G)$) satisfies $\chi_\ell^2=\D+1$?

The authors \cite{blp13} proved that $\frac{14}{5} \leq M$, which was recently also proved by Cranston and \v{S}krekovski~\cite{cs13}. We know that $M \leq 3$ due to the family of graphs that appears in~\cite{bgint04} (see Figure~\ref{fig:mad3}), which are of maximum average degree $<3$, of increasing maximum degree, and whose squares are not $(\Delta+1)$-colorable. We prove here that $3 \leq M$, thus obtaining the exact value of $M$, which is $3$.\\

\begin{figure}[!h]
\center
\begin{tikzpicture}
    \tikzstyle{whitenode}=[draw,circle,fill=white,minimum size=9pt,inner sep=0pt]
    \tikzstyle{blacknode}=[draw,circle,fill=black,minimum size=6pt,inner sep=0pt]
 \draw (0,0) node[blacknode] (a) {}
-- ++(0:1.5cm) node[blacknode] (b) {}
-- ++(0:1.5cm) node[blacknode] (c) {};

\draw (a) 
-- ++(30:1.1cm) node[blacknode] (d) {};

\draw (a) 
-- ++(60:1.7cm) node[blacknode] (f) {};

\draw (c) 
-- ++(150:1.1cm) node[blacknode] (e) {};

\draw (c) 
-- ++(120:1.7cm) node[blacknode] (g) {};

\draw (d) edge  node {} (e);

\draw (f) edge  node {} (g);

\draw[loosely dotted][thick] (1.5,0.8) edge  node {} (1.5,1.4);

\draw (3.2,0) node[right=1pt] (g) {\footnotesize $1$};
\draw (3.2,0.6) node[right=1pt] (g) {\footnotesize $2$};
\draw (3.2,1.4722) node[right=1pt] (g) {\footnotesize $p$};
\draw[loosely dotted][thick] (3.4,0.8) edge  node {} (3.4,1.3);
\end{tikzpicture}
\caption{A graph $G_p$ with $\Delta(G_p) = p$, $\mad(G_p)=3-\frac{5}{2p+1}$ and $\chi^2(G_p)=\Delta(G_p)+2$.}
\label{fig:mad3}
\end{figure}
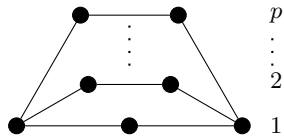

\begin{theorem}\label{thm:m145}
There exists a function $f$ such that for any $\epsilon>0$, every graph $G$ with $mad(G)<3-\epsilon$ and $\Delta(G) \geq f(\epsilon)$ satisfies $\chi^2_\ell(G)= \Delta(G)+1$.
\end{theorem}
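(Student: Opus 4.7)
The plan is to prove Theorem~\ref{thm:m145} by combining a list of reducible local configurations with a discharging argument on a minimum counterexample. Fix $\epsilon>0$ and, for a function $f$ to be specified during the analysis, suppose for contradiction that $G$ is a counterexample of minimum order $|V(G)|+|E(G)|$: so $\mad(G)<3-\epsilon$, $\Delta(G)\ge f(\epsilon)$, $\chi^2_\ell(G)>\Delta(G)+1$, and every proper subgraph of $G$ inheriting the $\mad$ bound has its square list $(\Delta+1)$-colorable. The goal is to exhibit a contradiction by showing that the initial charge $\omega(v):=d(v)-(3-\epsilon)$, whose total $\sum_v\omega(v)=2|E(G)|-(3-\epsilon)|V(G)|$ is strictly negative by hypothesis, can be redistributed so that $\omega^*(v)\ge 0$ for every $v$.

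Step~1 (reducible configurations). I would first establish a list of local configurations that cannot occur in $G$. The prototypical argument is: assume such a configuration is present, delete or shrink part of it to obtain a smaller graph $G'$ which inherits $\mad(G')\le\mad(G)<3-\epsilon$, list-color the square of $G'$ by minimality, and then extend the coloring to $G$ by showing that each removed vertex $v$ sees strictly fewer colors on its distance-$2$ neighbourhood than the $\Delta(G)+1$ colors available in its list. The configurations to rule out will typically include pendant edges, two adjacent vertices of small degree, short \emph{threads} (paths of degree-$2$ vertices between heavier endpoints), and more intricate mixed configurations featuring vertices of moderate degree with too many light neighbours. The slack needed for these extension arguments is provided by $\Delta(G)\ge f(\epsilon)$, which guarantees that any vertex of bounded degree has many free colours available.

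Step~2 (discharging). With an appropriate set of reducible configurations in hand, I would design discharging rules that transfer charge from heavier vertices (positive initial charge) to lighter ones (negative initial charge) along the natural routes: high-degree vertices feed charge to their degree-$2$ neighbours and, through them, further along threads. The verification amounts to a case analysis on $\omega^*(v)$ according to the degree of $v$ and the structure of its neighbourhood; in each case the forbidden configurations rule out the worst scenarios and force $\omega^*(v)\ge 0$. A non-negative final total then contradicts the negative initial total, proving the theorem.

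Step~3 (main obstacle). The delicate point is that the constant $3$ is tight, witnessed by the graphs $G_p$ of Figure~\ref{fig:mad3}, which satisfy $\mad(G_p)<3$, have arbitrarily large $\Delta$, and $\chi^2(G_p)=\Delta+2$. The reducible configurations must therefore be rich enough to exclude every local structure compatible with $\mad<3-\epsilon$ while leaving room for the extremal pattern of $G_p$. Concretely, I expect the hardest reducible configurations to resemble a piece of $G_p$: a pair of high-degree vertices joined by several length-$2$ paths through degree-$2$ vertices, or the same pattern with one endpoint replaced by a vertex of moderate degree. Showing that such configurations are reducible is the technical heart of the proof; it requires coloring-extension arguments that recolour a bounded number of vertices simultaneously, which is where the hypothesis $\Delta(G)\ge f(\epsilon)$ is used most crucially to ensure enough unused colours are always available.
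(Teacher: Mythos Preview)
Your overall framework (minimum counterexample, reducible configurations, discharging with initial charge $d(v)-(3-\epsilon)$) matches the paper, but there is a genuine gap in Step~3: you plan to handle the $G_p$-type obstruction with \emph{local} reducible configurations that ``recolour a bounded number of vertices simultaneously'' and with purely local discharging rules. This is precisely what does not work here, and the paper's proof is explicitly \emph{global}. The problematic structure is not a single bounded-size piece of $G_p$; it is an unbounded collection of small ``weak components'' (trees built from vertices of degree $2$ and small degree) hanging off vertices of degree $\ge k-M$. No finite list of local configurations rules these out, and no local rule can feed them enough charge: a vertex of degree $\Delta$ can only afford to send $O(1)$ charge in total, yet it may carry arbitrarily many such pendants.

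What the paper actually does, and what your outline is missing, is a two-part global device. First, a global \emph{counting} lemma (Lemma~\ref{lem:Cm}): the number of small weak components is at most $\tfrac{1}{\epsilon}$ times the number of big vertices. This is proved not by recolouring a bounded set, but by deleting an \emph{unbounded} set $S'_w\cup T'$ and extending the colouring via an auxiliary bipartite multigraph whose edges are the deleted $V_2$-vertices; the extension step invokes the Borodin--Kostochka--Woodall theorem on list edge-colouring of bipartite multigraphs (Theorem~\ref{th:bkw97}). Second, a global \emph{discharging} rule $R_g$: every vertex of degree $\ge k-M$ pays $\tfrac{1}{\epsilon}$ into a common pot, and every small weak component withdraws $1$ from it; Lemma~\ref{lem:Cm} guarantees the pot stays nonnegative. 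Only after this global transfer can the component-by-component verification succeed. Your proposal should replace the ``bounded recolouring'' plan with this global machinery (or an equivalent).
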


This answers the transposition of Conjecture~\ref{conj:wl03} to graphs with an upper-bounded maximum average degree. As the maximum average degree is not discrete, we obtain a sharper value than for planar graphs of bounded girth.

More generally, is it possible to get similar results when allowing an additional constant number of colors, as was done by Wang and Lih in~\cite{wl03} for planar graphs? More precisely, what is, for any $C \geq 2$, the supremum $M(C)$ such that any graph $G$ with $\mad(G) < M(C)$ and sufficiently large $\Delta(G)$ (depending only on $mad(G)$) satisfies $\chi^2_\ell(G) \leq \D+C$?

The authors proved in \cite{blp13} that $\lim_{C\to\infty} M(C)= 4$. Interestingly, while graphs with $\mad(G)<4-\epsilon$ satisfy $\chi^2_\ell(G)\leq \Delta(G)+\mathcal{O}(\frac{1}{\epsilon})$, some graphs with $\mad(G)<4$ and arbitrarily large maximum degree have $\chi^2(G)\geq \frac{3\Delta(G)}{2}$. This is true even with a restriction to planar graphs with girth at least $4$.

Charpentier~\cite{c12} generalized the family of graphs presented in Figure~\ref{fig:mad3} to obtain for each $C$ a family of graphs which are of maximum average degree less than $\frac{4C+2}{C+1}$, of increasing maximum degree, and whose square requires $\Delta+C+1$ colors to be colored (see Figure~\ref{fig:mad4i}). Consequently, for every $C$, we have $M(C) \leq \frac{4C+2}{C+1}$.

\begin{figure}[!h]
\center
\begin{tikzpicture}[scale=2.5,rotate=180]
\tikzstyle{blacknode}=[draw,circle,fill=black,minimum size=6pt,inner sep=0pt]
\draw (0,0) node[blacknode] (u) [label=right:$u$] {}
-- ++(-45:1cm) node[blacknode] (v1) [label=90:$v_1$] {}
-- ++(0:1cm) node[blacknode] (v1w1) {}
-- ++(0:1cm) node[blacknode] (w1) [label=90:$w_1$] {}
-- ++(45:1cm) node[blacknode] (x) [label=left:$x$] {};
 

\draw (u) 
-- ++(45:1cm) node[blacknode] (vp) [label=-90:$v_p$] {}
-- ++(0:1cm) node[blacknode] (vpwC) {}
-- ++(0:1cm) node[blacknode] (wC) [label=-90:$w_C$] {};
\draw (wC) edge  node {} (x);

\draw (u) 
-- ++(-20:0.75cm) node[blacknode] (v2) [label=90:$v_2$] {}
-- ++(0:1cm) node[blacknode] (v2w2) {}
-- ++(0:1cm) node[blacknode] (w2) [label=90:$w_2$] {};
\draw (w2) edge  node {} (x);

\draw (vp)
-- ++(-38:0.7cm) node[blacknode] (vpw1) {};
\draw (vpw1) edge  node {} (w1);
\draw (vp)
-- ++(-28:0.7cm) node[blacknode] (vpw2) {};
\draw (vpw2) edge  node {} (w2);
\draw (wC)
-- ++(180+28:0.7cm) node[blacknode] (v2wC) {};
\draw (v2wC) edge  node {} (v2);

\draw (v2)
-- ++(-12:0.7cm) node[blacknode] (v2w1) {};
\draw (v2w1) edge  node {} (w1);
\draw (v1)
-- ++(12:1.4cm) node[blacknode] (v1w2) {};
\draw (v1w2) edge  node {} (w2);

\draw (v1)
-- ++(35:1.6cm) node[blacknode] (v1wC) {};
\draw (v1wC) edge  node {} (wC);
 
\draw[loosely dotted][very thick] (0.7,-0.1) edge  node {} (0.7,0.5); 
\draw[loosely dotted][very thick] (2.7,-0.1) edge  node {} (2.7,0.5); 

\draw plot [smooth, tension=0.5] coordinates {(x) (2.7,1) (0.7,1) (u)};
\end{tikzpicture}
\caption{For $p \geq C$, a graph $G_{p,C}$ with $\Delta(G_{p,C}) = p$, $\mad(G_{p,C})=\frac{
(2C+1)(2p+1)+1
}{
(C+1)(p+1)+1
}$ and $\chi^2(G_{p,C})=p+C+1$.
}
\label{fig:mad4i}
\end{figure}
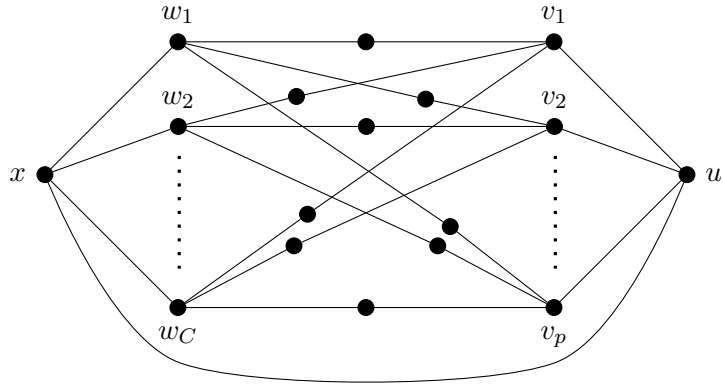

This result, and the fact that $\frac{4C+2}{C+1}$ equals $M(C)$ when $C=1$ and when $C$ tends to infinite, raise the following question.

\begin{question}
Is it true that $M(C)=\frac{4C+2}{C+1}$ for any $C\geq 1$?
\end{question}

Theorem~\ref{thm:m145} is proved using a discharging method. The discharging method was introduced in the beginning of the 20$^{th}$ century. It is notably known for being used to prove the Four Color Theorem (\cite{ah77} and \cite{ahk77}). When the discharging rules are local (ie the weight cannot travel arbitrarily far), as is most commonly, we say the discharging method is \emph{local}. Borodin, Ivanova and Kostochka introduced in~\cite{bik05} the notion of \emph{global} discharging, which is when there is no bound on the size of the discharging rules (ie the weight can travel arbitrarily far along the graph). When it is mixed, ie the discharging rules are of bounded size but take into account structures of unbounded size in the graph, we say the discharging method is \emph{semi-global} (see~\cite{b89} for a first occurrence of such a proof). Our proof of Theorem~\ref{thm:m145} is presented in Section~\ref{sect:thm} as global for simplicity, but could actually be made semi-global by more careful discharging. The global discharging argument is of the same vein as a nice proof of Borodin, Kostochka and Woodall~\cite{bkw97} later simplified by Woodall~\cite{w10}. We explain in Section~\ref{sect:inj} how this proof can be transposed to injective colorings.

\section{Proof of Theorem~\ref{thm:m145}}\label{sect:thm}

We prove that there exists a function $f$ such that for any $\epsilon>0$, every graph $G$ with $mad(G)<3-\epsilon$ and $\Delta(G) \geq f(\epsilon)$ satisfies $\chi^2_\ell(G)= \Delta(G)+1$. In the following, we try to simplify the proof rather than improve the function $f$.

For technical reasons, we will have to consider $\epsilon \leq \frac{1}{20}$. For $\epsilon>\frac{1}{20}$, it suffices to set $f(\epsilon)=f(\frac{1}{20})$. Indeed, if $\epsilon>\frac{1}{20}$, then for every graph with $mad(G)<3-\epsilon$ and $\Delta(G) \geq f(\epsilon)$, we have in particular $mad(G)<3-\frac{1}{20}$ and $\Delta(G)\geq f(\frac{1}{20})$, thus the conclusion holds. From now on, we consider $\epsilon \leq \frac{1}{20}$.

Let $f : \epsilon \mapsto \frac{3}{\epsilon^2}$. Assume by contradiction that there exist a constant $\frac{1}{20} \geq \epsilon > 0$ and a graph $\Gamma$ with $mad(\Gamma)<3-\epsilon$ and $\Delta(\Gamma) \geq f(\epsilon)$ that satisfies $\chi^2_\ell(\Gamma)>\Delta(\Gamma)+1$. There is a \emph{minimal} subgraph $G$ of $\Gamma$ such that $\chi^2_\ell(G)>\Delta(\Gamma)+1$, in the sense that the square of every proper subgraph of $G$ is list $(\Delta(\Gamma)+1)$-colorable. For $k=\Delta(\Gamma)$, the graph $G$ satisfies $\Delta(G)\leq k$ and $\chi^2_\ell(G)>k+1$, while the square of all its proper subgraphs are list $(k+1)$-colorable. We aim at proving that $mad(G) \geq 3-\epsilon$, a contradiction to the fact that $G$ is a subgraph of $\Gamma$ with $mad(\Gamma)<3-\epsilon$.

Let $M=\frac{6}{\epsilon}$. Note that since $\epsilon \leq \frac{1}{20}$, we have  $k=\Delta(\Gamma)\geq f(\epsilon)=\frac{3}{\epsilon^2} \geq \frac{18}{\epsilon}=3\times M$.

In Subsection~\ref{sect:def}, we introduce the terminology and notation. In Subsection~\ref{sect:dis}, we use the structural observations from Subsections~\ref{sect:conf} and \ref{sect:glob} to derive with a discharging argument that such a graph has maximum average degree at least $3-\epsilon$, which concludes the proof.
\subsection{Terminology and notations}\label{sect:def}

In the figures, we draw in black a vertex that has no other neighbor than the ones already represented, in white a vertex that might have other neighbors than the ones represented. When there is a label inside a white vertex, it is an indication on the number of neighbors it has. The label '$i$' means "exactly $i$ neighbors", the label '$i^+$' (resp. '$i^-$') means that it has at least (resp. at most) $i$ neighbors. Note that the white vertices may coincide with other vertices.

A \emph{constraint} of a vertex $u$ is an already colored vertex that is adjacent to or has a common neighbor with $u$. Two constraints with the same color count as one.

Given a vertex $u$, the \emph{neighborhood} $N(u)$ is the set of vertices that are adjacent to $u$. For $p\geq 1$, a \emph{$p$-link} $x-a_1-\ldots-a_p-y$ between $x$ and $y$ is a path such that $d(a_1)=\ldots=d(a_p)=2$. When a $p$-link exists between two vertices $x$ and $y$, we say they are \emph{$p$-linked}.

\subsection{Forbidden Configurations}\label{sect:conf}

We define configurations \textbf{($C_1$)} to \textbf{($C_3$)} (see Figure~\ref{fig:config}). 
\begin{itemize}
\item \textbf{($C_1$)} is a vertex $u$ of degree $0$ or $1$. 
\item \textbf{($C_2$)} is a vertex $w_1$ of degree at most $k-1$ that is $2$-linked (through $w_1$-$u_1$-$u_2$-$w_2$) to a vertex $w_2$ of degree at most $k-2$.
\item \textbf{($C_3$)} is a vertex $u$ with $3 \leq d(u) \leq M$ that is $1$-linked (through $u$-$v_i$-$w_i$) to $(d(u)-2)$ vertices $(w_i)_{1 \leq i \leq d(u)-2}$ of degree at most $M$, and such that the sum of the degrees of its two other neighbors $x$ and $y$ is at most $k-M+2$.
\end{itemize}

\captionsetup[subfloat]{labelformat=empty}
\begin{figure}[!h]
\centering
\subfloat[][\textbf{($C_1$)}]{
\centering
\begin{tikzpicture}[scale=0.95]
\tikzstyle{whitenode}=[draw,circle,fill=white,minimum size=8pt,inner sep=0pt]
\tikzstyle{blacknode}=[draw,circle,fill=black,minimum size=6pt,inner sep=0pt]
\tikzstyle{tnode}=[draw,ellipse,fill=white,minimum size=8pt,inner sep=0pt]
\tikzstyle{texte} =[fill=white, text=black]
   \draw (0,0) node[whitenode] (u) [label=right:$u$] {\small{$1^-$}};
\end{tikzpicture}
\label{fig:cc1}
}
\qquad
\subfloat[][\textbf{($C_2$)}]{
\centering
\begin{tikzpicture}[scale=0.95]
\tikzstyle{whitenode}=[draw,circle,fill=white,minimum size=8pt,inner sep=0pt]
\tikzstyle{blacknode}=[draw,circle,fill=black,minimum size=6pt,inner sep=0pt]
\tikzstyle{tnode}=[draw,ellipse,fill=white,minimum size=8pt,inner sep=0pt]
\tikzstyle{texte} =[fill=white, text=black] 
     \draw (2,-3) node[tnode] (w1) [label=right:$w_1$] {\small{$(k-1)^-$}}
        -- ++(90:1cm) node[blacknode] (u1) [label=right:$u_1$] {}
        -- ++(90:1cm) node[blacknode] (u2) [label=right:$u_2$] {}
        -- ++(90:1cm) node[tnode] (w2) [label=90:$w_2$] {\small{$(k-2)^-$}};
\end{tikzpicture}
\label{fig:cc2}
}
\qquad
\subfloat[][\textbf{($C_3$)}]{
\centering
\begin{tikzpicture}[scale=0.95]
\tikzstyle{whitenode}=[draw,circle,fill=white,minimum size=8pt,inner sep=0pt]
\tikzstyle{blacknode}=[draw,circle,fill=black,minimum size=6pt,inner sep=0pt]
\tikzstyle{tnode}=[draw,ellipse,fill=white,minimum size=8pt,inner sep=0pt]
\tikzstyle{texte} =[fill=white, text=black]

\draw (0,0) node[blacknode] (u) [label=90:$u$] {}
-- ++(160:1cm) node[tnode] (v3) [label=90:$x$] {};
\draw (u)
-- ++(200:1cm) node[tnode] (v4) [label=-90:$y$] {};

\draw (u)
-- ++(40:1cm) node[blacknode] (v1) [label=90:$v_1$] {}
-- ++(0:1cm) node[tnode] (w1) [label=80:$w_1$] {\small{$M^-$}};

\draw (u)
-- ++(-40:1cm) node[blacknode] (vd) [label=-90:$v_{d(u)-2}$] {}
-- ++(0:1cm) node[tnode] (wd) [label=-80:$w_{d(u)-2}$] {\small{$M^-$}};

\draw[thick, dotted, bend left] (20:1.3cm) edge node {} (-20:1.3cm);

\node[texte, right=1pt] at (-1.2,-1.7) {$d(x)+d(y) \leq k-M+2$};
\node[texte, right=1pt] at (-0.2,-2.3) {$3 \leq d(u)\leq M$};
\end{tikzpicture}
\label{fig:cc3}
}
\caption{Forbidden configurations for Theorem~\ref{thm:m145}.}
\label{fig:config}
\end{figure}
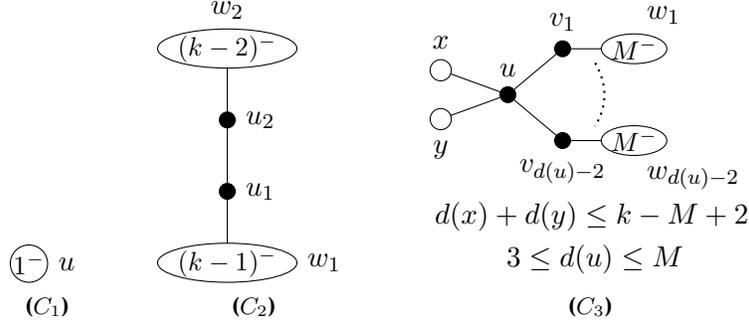
\captionsetup[subfloat]{labelformat=parens}

\begin{lemma}\label{lem:config}
Graph $G$ cannot contain any of Configurations \textbf{($C_1$)} to \textbf{($C_3$)}.
\end{lemma}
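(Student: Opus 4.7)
The plan is to derive each forbidden configuration from the minimality of $G$: in each case, I remove a small set of vertices, invoke the hypothesis to list $(k+1)$-color the square of the resulting proper subgraph, and then greedily extend the coloring to the removed vertices by a direct counting of constraints. The key numerical tool throughout is $k \geq 3M$.

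For $(C_1)$: delete the vertex $u$ of degree at most $1$. The square of $G-u$ is list $(k+1)$-colorable by minimality. The constraints on $u$ in $G^2$ are its (at most one) neighbor and the remaining neighbors of that neighbor, so at most $1 + (k-1) = k$ distinct colors; a color in the list of $u$ remains free.

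For $(C_2)$: delete both interior vertices $u_1, u_2$ of the $2$-link and list $(k+1)$-color $(G-\{u_1,u_2\})^2$ by minimality. Extend in the order $u_1, u_2$. With $u_2$ still uncolored, the constraints on $u_1$ are $w_1$, the at most $d(w_1)-1$ other neighbors of $w_1$, and $w_2$ (reached through $u_2$): at most $d(w_1)+1 \leq k$. After choosing a color for $u_1$, the constraints on $u_2$ are $w_2$, its at most $d(w_2)-1$ other neighbors, $w_1$, and $u_1$: at most $d(w_2)+2 \leq k$. Both bounds are strictly less than $k+1$, so the extension succeeds.

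For $(C_3)$: delete $u$ and $v_1,\dots,v_{d(u)-2}$, and list $(k+1)$-color what remains. Color $u$ first. Its constraints are $x,y$, the other neighbors of $x$ and $y$, and the vertices $w_i$ (each a second neighbor through $v_i$); this totals at most $d(x)+d(y)+(d(u)-2) \leq (k-M+2)+(M-2)=k$. Then color the $v_i$'s in any order: when coloring $v_i$, its constraints are $u$, $w_i$, the at most $d(w_i)-1 \leq M-1$ other neighbors of $w_i$, the two already colored neighbors $x, y$ of $u$, and the previously colored $v_j$'s ($d(u)-3$ of them), for a total of at most $d(u)+d(w_i) \leq 2M < k+1$. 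So a free color always remains.

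The main obstacle is the bookkeeping in $(C_3)$, where constraints on $u$ and on the $v_i$'s arrive from two geometrically different sides (the low-degree branches $v_i$--$w_i$ on one hand and the high-degree anchors $x, y$ on the other). The thresholds in the configuration are tuned precisely so that the two contributions to $u$'s constraints sum to at most $k$, while the two contributions to a $v_i$'s constraints each remain bounded by $M$, letting the inequality $k \geq 3M$ do the final work.
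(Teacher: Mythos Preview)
Your arguments for $(C_1)$ and $(C_2)$ are correct and match the paper. The $(C_3)$ case, however, has a real gap.

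You delete $u$ together with $v_1,\dots,v_{d(u)-2}$ and then count only the constraints \emph{on} the deleted vertices. But deleting $u$ also destroys a distance-$2$ relation between vertices you are \emph{not} recoloring: $x$ and $y$ are both neighbors of $u$, hence adjacent in $G^2$, yet in $G\setminus\{u,v_1,\dots,v_{d(u)-2}\}$ they need not be at distance at most $2$. The list $(k+1)$-coloring of the square of the reduced graph may therefore give $x$ and $y$ the same color, and no choice of color for $u$ (or for the $v_i$'s) repairs this; the extension you produce is simply not a proper coloring of $G^2$. Your constraint count for $u$ is fine, but it addresses the wrong question.

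The paper avoids this by keeping $u$ in the graph when invoking minimality (so the edge $xy$ of $G^2$ is preserved and $x,y$ automatically receive distinct colors), and then \emph{recoloring} $u$ to fix any conflict with the $w_i$'s, using exactly the same constraint bound $d(x)+d(y)+(d(u)-2)\leq k$ you computed. After that, the $v_i$'s are colored as you describe. So the fix is minimal: delete only $v_1,\dots,v_{d(u)-2}$, then uncolor and recolor $u$ before handling the $v_i$'s.
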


\begin{proof}
We assume $G$ contains a configuration, apply the minimality to color a subgraph of $G$, and prove this coloring can be extended to the whole graph, a contradiction.

\begin{claim}
$G$ cannot contain \textbf{($C_1$)}.
\end{claim}
\begin{proof}
Using the minimality of $G$, we color $G \setminus \{u\}$. Since $\Delta(G) \leq k$, and $d(u) \leq 1$, vertex $u$ has at most $k$ constraints. There are $k+1$ colors, so the coloring of $G \setminus \{u\}$ can be extended to $G$.
\end{proof}

\begin{claim}
$G$ cannot contain \textbf{($C_2$)}.
\end{claim}
\begin{proof}
Using the minimality of $G$, we color $G \setminus \{u_1,u_2\}$. Vertex $u_1$ has at most $|\{w_2\}|+d(w_1)\leq 1+(k-1)\leq k$ constraints. Hence we can color $u_1$. Then $u_2$ has at most $|\{w_1,u_1\}|+d(w_2)\leq 2+(k-2)\leq k$ constraints, so we can extend the coloring of $G \setminus \{u_1,u_2\}$ to $G$.
\end{proof}

\begin{claim}
$G$ cannot contain \textbf{($C_3$)}.
\end{claim}
\begin{proof}
Using the minimality of $G$, we color $G \setminus \{v_1,\cdots,v_{d(u)-2}\}$. We did not delete $u$ in order to obtain a coloring where $x$ and $y$ receive different colors, but $u$ might have the same color as some $w_i$, so it needs to be recolored. Vertex $u$ has at most $M-2 +d(x)+d(y) \leq k$ constraints, hence we can recolor $u$. Then every $v_i$ has at most $M+M \leq k$ constraints, so we can extend the coloring of $G \setminus \{v_1,\cdots,v_{d(u)-2}\}$ to $G$. 
\end{proof}

This concludes the proof of Lemma~\ref{lem:config}.
\end{proof}

\subsection{Global structure}\label{sect:glob}

We define three sets $V_1$, $V_2$ and $T$ that will outline some global structure on $G$.
We build inductively the set $V_1$ as follows.

Any vertex $u$ of degree at most $M-1$ belongs to $V_1$ if it is adjacent to $d(u)-1$ vertices $v_1,\ldots,v_{d(u)-1}$ of degree $2$ such that the other neighbors of $v_2,\ldots,v_{d(u)-1}$ belong to $V_1$ while the other neighbor of $v_1$ is of degree at most $M-1$ (but does not necessarily belong to $V_1$).

Let $V_2$ be the subset of $V_1$ defined as the set of vertices $u$ of degree at most $M-1$ that are adjacent to $d(u)-1$ vertices of degree $2$ whose other neighbors all belong to $V_1$. We define $T$ as the set of vertices of degree $2$ whose both neighbors are in $V_1$. See Figure~\ref{fig:V1} for examples of vertices in $V_1$, $V_2$ or $T$. In the figures, we denote by a label $V_1$ (resp. $V_2$, $T$) the fact that a vertex belongs to $V_1$ (resp. $V_2$, $T$). Similarly, we denote by a label $\neg V_1$ a vertex that does not belong to $V_1$. Since $V_2\subset V_1$, we omit the label $V_1$ on vertices labelled $V_2$.

\begin{figure}
\centering
\begin{tikzpicture}[scale=0.95]
\tikzstyle{blacknode}=[draw,circle,fill=black,minimum size=6pt,inner sep=0pt]
\tikzstyle{tnode}=[draw,ellipse,fill=white,minimum size=8pt,inner sep=0pt]
\tikzstyle{texte} =[fill=white, text=black] 
     \draw (0,0) node[tnode] (v1) [label=right:] {\small{$(M-1)^-$}}
        -- ++(90:1cm) node[blacknode] (u1) [label=right:] {}
        -- ++(90:1cm) node[blacknode] (x) [label=right:$V_1$] {}
        -- ++(90:1cm) node[tnode] (w2) [label=90:] {};

     \draw (2,-0.5) node[tnode] (v1) [label=right:] {}
        -- ++(90:1cm) node[blacknode] (u1) [label=right:$V_2$] {}
        -- ++(90:1cm) node[blacknode] (y) [label=right:$T$] {}
        -- ++(90:1cm) node[blacknode] (x) [label=right:$V_2$] {}
        -- ++(90:1cm) node[tnode] (w2) [label=90:] {};
\draw[dotted] (2,1.5) ellipse (.7cm and 1.5cm);

     \draw (4,-0.5) node[tnode] (v1) [label=right:] {}
        -- ++(60:1cm) node[blacknode] (u1) [label=left:$V_2$] {}
        -- ++(60:1cm) node[blacknode] (y) [label=left:$T$] {}
        -- ++(60:1cm) node[blacknode] (x) [label=right:$V_2$] {}
        -- ++(90:1cm) node[tnode] (w2) [label=90:] {};
\draw (x)
        -- ++(-60:1cm) node[blacknode] (y3) [label=right:$T$] {}
        -- ++(-60:1cm) node[blacknode] (x3) [label=right:$V_2$] {}
        -- ++(-60:1cm) node[tnode] (w3) [label=90:] {};
\draw[dotted] (5.5,1) ellipse (2cm and 1.5cm);

     \draw (8,-0.5) node[tnode] (v1) [label=right:] {}
        -- ++(60:1cm) node[blacknode] (u1) [label=left:$V_1$] {}
        -- ++(60:1cm) node[blacknode] (y) [label=left:$\neg T$] {}
        -- ++(60:1cm) node[blacknode] (x) [label=right:$\neg V_1$] {}
        -- ++(135:1cm) node[tnode] (w2) [label=90:] {$3^+$};

\draw (u1)
        -- ++(-60:1cm) node[blacknode] (y) [label=right:$T$] {}
        -- ++(-60:1cm) node[blacknode] (x3) [label=right:$V_2$] {}
        -- ++(-60:1cm) node[tnode] (w3) [label=90:] {};

\draw (x)
        -- ++(-60:1cm) node[blacknode] (y) [label=right:$\neg T$] {}
        -- ++(-60:1cm) node[blacknode] (x3) [label=right:$V_1$] {}
        -- ++(-60:1cm) node[tnode] (w3) [label=90:] {};
\draw(x)
  -- ++(45:1cm) node[tnode] (w3) [label=90:] {$3^+$};
\end{tikzpicture}
\caption{Examples of vertices in $V_1$, $V_2$ or $T$.}
\label{fig:V1}
\end{figure}
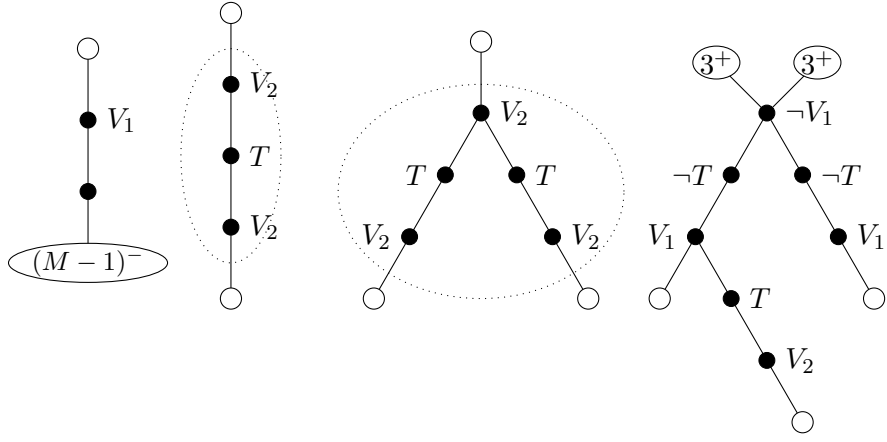

\begin{lemma}\label{lem:V1}
The vertices of $V_1$ satisfy the following:
\begin{itemize}
\item Every vertex of $V_1$ has exactly one neighbor of degree at least $k-M$.
\item The set $V_1$ is a stable set.
\item The sets $V_1$ and $T$ are disjoint.
\end{itemize}
\end{lemma}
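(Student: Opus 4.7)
The plan is to prove the three items in order, using item~1 as the workhorse; items~2 and~3 will then fall out by chasing the definition of $V_1$ and contrasting degree bounds. The one arithmetical fact I will use freely is $k \geq 3M$, which gives $k - M \geq 2M > M - 1$, so no vertex of degree $\leq M - 1$ can itself be a ``high-degree'' neighbor, and a degree-$2$ vertex is far below the threshold $k - M$.

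For item~1, I fix $u \in V_1$ and let $z$ denote the one neighbor of $u$ left unspecified by the $V_1$-definition (the other $d(u) - 1$ neighbors are the $v_i$, of degree $2 < k - M$). So I only need to show $d(z) \geq k - M$. By \textbf{($C_1$)} we have $d(u) \geq 2$, and by definition $d(u) \leq M - 1$, so I split on $d(u) = 2$ versus $d(u) \geq 3$. If $d(u) = 2$, the path $z$-$u$-$v_1$-$w$ is a $2$-link, where by the $V_1$-definition at $u$ the vertex $w$ satisfies $d(w) \leq M - 1 \leq k - 2$; if $d(z) \leq k - 1$, then \textbf{($C_2$)} would appear, contradicting Lemma~\ref{lem:config}, and so $d(z) \geq k$. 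If $d(u) \geq 3$, I take $d(u) - 2$ of the $v_i$ (each linked to a vertex of degree $\leq M - 1 \leq M$) to instantiate the $v_i$-$w_i$ pattern of \textbf{($C_3$)}; the two remaining neighbors of $u$ are $z$ and the unused $v_{d(u)-1}$, which has degree $2$. If $d(z) \leq k - M$, the sum of the degrees of these two remaining neighbors is at most $k - M + 2$, so \textbf{($C_3$)} applies, again contradicting Lemma~\ref{lem:config}. This gives $d(z) \geq k - M + 1$ in this case, and uniqueness is automatic since the $v_i$ are of degree $2$.

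For item~2, I suppose $u, u' \in V_1$ are adjacent. Applying item~1 at $u'$, and using $d(u) \leq M - 1 < k - M$, the vertex $u$ cannot be the free neighbor of $u'$, so $u$ is one of the $v_i$'s of $u'$; in particular $d(u) = 2$, and symmetrically $d(u') = 2$. If $u'$ is the free neighbor of $u$, item~1 gives $d(u') \geq k - M$, contradicting $d(u') = 2$; otherwise $u'$ is the specified degree-$2$ neighbor of $u$, so the other neighbor of $u'$ must be the free neighbor of $u'$ and therefore has degree $\leq M - 1$ by the $V_1$-definition at $u$, while item~1 at $u'$ forces this same degree to be $\geq k - M$, a contradiction. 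Item~3 is immediate: any $u \in V_1 \cap T$ has $d(u) = 2$ with both neighbors in $V_1$, hence both of degree $\leq M - 1$, contradicting item~1 which provides a neighbor of degree $\geq k - M$.

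The main obstacle is the $d(u) \geq 3$ case of item~1: one has to check that the $V_1$-structure at $u$ genuinely instantiates configuration \textbf{($C_3$)}, tracking which neighbor plays which role and noting that the unused degree-$2$ vertex $v_{d(u)-1}$ consumes exactly the two units of degree budget that the bound $k - M + 2$ leaves free once $d(z) \leq k - M$ is assumed. Once this alignment is in place, items~2 and~3 are essentially bookkeeping against the $V_1$ definition combined with item~1.
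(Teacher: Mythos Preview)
Your proof is correct and follows essentially the same approach as the paper: both establish item~1 by invoking \textbf{($C_2$)} when $d(u)=2$ and \textbf{($C_3$)} when $d(u)\geq 3$, then derive items~2 and~3 from item~1 combined with the degree constraints built into the definition of $V_1$. Your handling of item~2 is somewhat more explicit than the paper's (which dispatches it in one terse sentence), and for item~3 you appeal directly to item~1 rather than to item~2 as the paper does, but these are cosmetic differences.
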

\begin{proof}
Assume by contradiction that a vertex $u$ of $V_1$ has no neighbor of degree at least $k-M$. Then $u$ is adjacent to $d(u)-1$ vertices $v_1,\ldots,v_{d(u)-1}$ of degree $2$ whose other neighbors are of degree at most $M-1$, and to another vertex $w$ of degree at most $k-M-1$. We consider two cases depending on whether $d(u)=2$. If $d(u)=2$, then the other neighbor of $v_1$ is a vertex of degree at most $M-1 \leq k-1$ that is $2$-linked to $w$, which is a vertex of degree at most $k-M-1\leq k-2$. By Lemma~\ref{lem:config}, Configuration \textbf{($C_2$)} is not contained in $G$, a contradiction. If $d(u)\geq 3$, then $u$ is a vertex with $3 \leq d(u) \leq M-1\leq M$ that is $1$-linked (through $v_i$, for $1\leq i \leq d(u)-2$) to $d(u)-2$ vertices of degree at most $M-1 \leq M$, and such that the sum of the degrees of its two other neighbors $w$ and $v_{d(u)-1}$ is at most $k-M-1+2 \leq k-M+2$. By Lemma~\ref{lem:config}, Configuration \textbf{($C_3$)} is not contained in $G$, a contradiction. Therefore every vertex $u$ of $V_1$ has a neighbor of degree at least $k-M$. By definition of $V_1$, all the other neighbors of $u$ are of degree $2$. Thus $u$ has a unique neighbor of degree at most $k-M$.

Since $k \geq 2 \times M$ then $k-M>M-1$ and vertex $u$ has no neighbor $v$ of degree $3 \leq d(v)\leq M-1$. Consequently, two vertices $u,v$ of $V_1$ that are adjacent must both be of degree $2$. By definition of $V_1$, the other neighbors of $u$ and $v$ must be of degree at most $M-1$, a contradiction. It follows that $V_1$ is a stable set in $G$ and thus $T \cap V_1 = \emptyset$.
\end{proof}

Any connected component $C$ of $G[V_1 \cup T]$ is a \emph{weak component of $G$} if every vertex belongs to $V_2$ or $T$ (in other words, if no vertex of $C$ belongs to $V_1$ and not to $V_2$). The only apparent weak components on Figure~\ref{fig:V1} are encircled. The \emph{size} of a component of $G[V_1 \cup T]$ is the number of vertices of $V_1$ it contains. Let $C_w$ be the set of weak components of $G$ of size less than $\frac{1}{\epsilon}$. Let $S_w$ be the set of vertices of $V_2$ that belong to an element of $C_w$. Let $U$ be the set of vertices of degree at least $k-M$ with a neighbor in $S_w$.


We first need the following two results (Theorem~\ref{th:bkw97} corresponds to Theorem 3 of \cite{bkw97}).
\begin{theorem}\cite{bkw97}\label{th:bkw97}
For any bipartite multigraph $G$, if $L$ is a color assignment such that $\forall (u,v)\in E, |L(u,v)|\geq max(d(u),d(v))$, then $G$ is $L$-edge-choosable.
\end{theorem}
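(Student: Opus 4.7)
The plan is to apply the kernel method, following the template of Galvin's proof of the bipartite List Edge Coloring Conjecture. Let $H$ denote the line graph of $G$, so that a proper list edge-coloring of $G$ corresponds exactly to a proper list (vertex) coloring of $H$. By the kernel lemma of Bondy--Boppana--Siegel, it suffices to exhibit an orientation $D$ of $H$ satisfying (i) every induced sub-digraph of $D$ has a kernel (an independent dominating set), and (ii) for every vertex $e = uv$ of $H$, $d^+_D(e) \leq \max(d(u), d(v)) - 1 < |L(u,v)|$.

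For the orientation, I would begin from Galvin's construction: fix a proper $\Delta(G)$-edge-coloring $\pi$ of $G$ via K\"onig's theorem, and orient $H$ asymmetrically across the bipartition $A \cup B$ of $G$. For two $G$-edges sharing a vertex $u \in A$, the edge of larger $\pi$-value dominates; for two $G$-edges sharing a vertex $v \in B$, the edge of smaller $\pi$-value dominates. Property (i) then follows from a Gale--Shapley stable-matching argument: for any edge subset $S \subseteq E(G)$, the opposing preferences of the two sides induced by $\pi$ admit a stable matching of $S$, and this matching is exactly a kernel of the induced sub-digraph of $D$.

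The main technical obstacle is (ii). A direct count only yields $d^+_D(e) \leq \Delta(G) - 1$, which is Galvin's bound but too weak whenever $\max(d(u), d(v)) < \Delta(G)$. To sharpen it, I would replace $\pi(e)$ in the orientation rule by its local rank among the $d(w)$ colors appearing at each endpoint $w$ (so that ranks at $w$ lie in $\{1,\dots,d(w)\}$ rather than in $\{1,\dots,\Delta(G)\}$), and possibly reverse the direction of the asymmetric rule depending on whether $d(u) \geq d(v)$ or $d(u) < d(v)$. With such a local-rank orientation, the out-contribution from the $u$-side is at most $d(u)-1$ and the out-contribution from the $v$-side at most $d(v)-1$; careful bookkeeping should combine these asymmetrically to yield $d^+_D(e) \leq \max(d(u), d(v)) - 1$. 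Property (i) survives this refinement because the stable-matching argument only uses the local orderings of edges at each vertex, not the global color palette. Arranging the local ranks so that the out-degree bound holds uniformly for every edge---perhaps by an iterated application of K\"onig's theorem on suitable subgraphs or by a careful induction on $|E(G)|$ that first peels off vertices of small degree---is the delicate step of the proof.
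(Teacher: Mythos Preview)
The paper does not prove this theorem at all; it is quoted verbatim as Theorem~3 of Borodin--Kostochka--Woodall \cite{bkw97} and used as a black box inside the proof of Lemma~\ref{lem:Cm}. So there is no ``paper's own proof'' to compare against, and your sketch should be judged on its own merits.

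Your overall framework is the right one and is indeed the approach of \cite{bkw97}: pass to the line graph, find a kernel-perfect orientation, and invoke the Bondy--Boppana--Siegel kernel lemma. The stable-matching argument for kernel-perfectness is also correct and, as you say, only needs a total order on the edges at each vertex. The genuine gap is exactly where you flag it: producing an orientation with $d^+_D(uv)\le \max(d(u),d(v))-1$. Your two proposed fixes both have problems. First, ``reverse the direction of the asymmetric rule depending on whether $d(u)\ge d(v)$'': if the rule at a shared vertex $w$ depends on the \emph{other} endpoints of the two competing edges, you no longer have a single total order at $w$, and the stable-matching argument collapses. Second, the ``local rank'' idea by itself is not enough: with ranks $r_u(e)\in\{1,\dots,d(u)\}$ and $r_v(e)\in\{1,\dots,d(v)\}$ and the Galvin-style rule, one computes $d^+(e)=\bigl(r_u(e)-1\bigr)+\bigl(d(v)-r_v(e)\bigr)$, so the needed inequality becomes $r_u(e)-r_v(e)\le \max\bigl(0,\,d(u)-d(v)\bigr)$ for every edge, and you have given no construction of the local orders that guarantees this simultaneously. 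What \cite{bkw97} actually do is construct a specific proper edge-colouring (built by iterated K\"onig-type matchings so that max-degree vertices are always saturated) for which the straight Galvin count already yields the sharper bound; the ``delicate step'' is precisely this construction, not a post-hoc relabelling, and your sketch does not yet contain it.
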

\begin{lemma}\label{lem:alphadegenerate}
For any bipartite multigraph $H$ with vertex set $A,B \neq \emptyset$. For $\alpha >0$, if for every subset $B' \subseteq B$ and $A'=N(B') \subseteq A$, there exists a vertex $u \in A'$ with $d_{B'}(u)<\alpha$, then $\alpha |A|>|B|$.
\end{lemma}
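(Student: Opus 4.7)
The plan is a greedy peeling argument on the $A$-side that effectively partitions $B$ into pieces, one per selected $A$-vertex, each of size strictly less than $\alpha$. I would construct inductively a sequence of distinct vertices $a_1, a_2, \ldots, a_k$ in $A$ together with a strictly decreasing chain $B = B_1 \supsetneq B_2 \supsetneq \cdots \supsetneq B_{k+1} = \emptyset$ of subsets of $B$, as follows. Setting $B_1 := B$, at each step $i$ with $B_i \neq \emptyset$ I would apply the hypothesis to $B' = B_i$ (with $A' = N(B_i)$), which produces some $a_i \in N(B_i)$ satisfying $d_{B_i}(a_i) < \alpha$, and I then set $B_{i+1} := B_i \setminus N(a_i)$.

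Three points need to be checked along the way. First, the hypothesis can legitimately be applied at every step: the existence clause ``there exists $u \in A'$'' implicitly forces $N(B_i) \neq \emptyset$ whenever $B_i \neq \emptyset$. Second, the chain strictly decreases, because $a_i \in N(B_i)$ has at least one neighbor in $B_i$, so $|B_i| - |B_{i+1}| = |N(a_i) \cap B_i| \geq 1$. Third, the $a_i$ are pairwise distinct: for any $j > i$, $a_j$ lies in $N(B_j) \subseteq N(B_{i+1})$, whereas $a_i$ has no neighbor in $B_{i+1}$ by construction. Since $|B|$ is finite, the process terminates at some step $k$ with $1 \leq k \leq |A|$, and telescoping then yields
$$|B| \;=\; \sum_{i=1}^{k} \bigl(|B_i| - |B_{i+1}|\bigr) \;=\; \sum_{i=1}^{k} |N(a_i) \cap B_i| \;\leq\; \sum_{i=1}^{k} d_{B_i}(a_i) \;<\; k\alpha \;\leq\; \alpha|A|,$$
which is exactly the desired conclusion $\alpha|A| > |B|$.

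I do not foresee any serious obstacle. The only subtle point is the strictness of the final inequality, but this is automatic: each summand is strictly less than $\alpha$ by the hypothesis, and the sum contains at least one term because $B$ is nonempty. The multigraph aspect is harmless, since $|N(a_i) \cap B_i| \leq d_{B_i}(a_i)$ can only be smaller in the presence of parallel edges, which strengthens the bound rather than weakening it.
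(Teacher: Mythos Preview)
Your proof is correct and is essentially the same peeling argument as the paper's: the paper phrases it as an induction on $|B|$ (apply the hypothesis to $B'=B$, remove the resulting low-degree vertex $u\in A$ together with $N(u)$, and recurse), whereas you unroll this induction into an explicit greedy iteration with the telescoping sum. Your write-up is in fact more careful about the bookkeeping (distinctness of the $a_i$, strict decrease, the multigraph inequality $|N(a_i)\cap B_i|\le d_{B_i}(a_i)$) than the paper's two-line version.
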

\begin{proof}
By induction on $|B|$. If $|B|<\alpha$, since $|A|\geq 1$, the conclusion holds. If $|B|\geq \alpha$, there exists $u \in A$ with $d(u)<\alpha$. We apply the induction hypothesis to the graph $H \setminus (\{u\}\cup N(u))$. It follows that $\alpha (|A|-1)>|B|-\alpha$, hence the result. 
\end{proof}

\begin{lemma}\label{lem:Cm}
The graph $G$ satisfies $|C_w|\leq \frac{1}{\epsilon} \times |\{v \in V \ | \ d_G(v) \geq k-M\}|$.
\end{lemma}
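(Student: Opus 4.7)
I would introduce a bipartite multigraph $H$ with parts $A = \{v \in V(G) : d_G(v) \geq k-M\}$ and $B = C_w$, where for every $v \in S_w$ we add one edge in $H$ between the weak component of $G[V_1 \cup T]$ containing $v$ (an element of $B$) and the unique neighbor $\phi(v)$ of $v$ of degree at least $k-M$ (an element of $A$, guaranteed by Lemma~\ref{lem:V1}). By construction $d_H(C) = |V_2 \cap C|$ equals the size of $C$, and so $d_H(C) < \frac{1}{\epsilon}$ for every $C \in B$.

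\paragraph{Key step.} The target is to apply Lemma~\ref{lem:alphadegenerate} with $\alpha = \frac{1}{\epsilon}$: its conclusion is exactly $|C_w| = |B| < \frac{1}{\epsilon}|A|$, which is the desired bound. The hypothesis to verify is that for every nonempty $B' \subseteq B$ some $u \in A' := N_H(B')$ has $d_{B'}(u) < \frac{1}{\epsilon}$. I would argue by contradiction: if every $u \in A'$ had $d_{B'}(u) \geq \frac{1}{\epsilon}$, the sub-multigraph $H[A' \cup B']$ would have minimum degree at least $\frac{1}{\epsilon}$ on the $A'$ side while the maximum degree on the $B'$ side would still be strictly less than $\frac{1}{\epsilon}$. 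I would then invoke Theorem~\ref{th:bkw97} on $H[A' \cup B']$ with lists of size $\max(d(u), d(v))$; the resulting list-edge-coloring partitions the edges into matchings in which every $u \in A'$ is saturated by at least $\frac{1}{\epsilon}$ matchings while every $C \in B'$ is saturated by strictly fewer. Reading off the induced selections of $V_2$-vertices at each high-degree vertex should either concentrate enough $V_2$-vertices inside a single component of $C_w$ to push its size up to $\frac{1}{\epsilon}$ (contradicting $C \in C_w$) or expose a forbidden configuration of Lemma~\ref{lem:config}.

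\paragraph{Main obstacle.} The delicate part is turning the matching-decomposition output into a concrete contradiction: a plain edge-count in $H[A' \cup B']$ only yields $|A'| < |B'|$, which by itself is not contradictory. The full list-edge-choosability strength of Theorem~\ref{th:bkw97} (rather than merely the chromatic index) is presumably what allows us to prescribe, for each high-degree vertex $u$, which of its $V_2$-neighbors is picked by a given color class, and it is this precise selection that should clash with the definition of weak components in $C_w$ or with the local forbidden configurations.
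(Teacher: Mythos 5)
Your setup coincides with the paper's first half: the same bipartite multigraph (edges in bijection with $S_w$, using the unique high-degree neighbour from Lemma~\ref{lem:V1}), the degree bound below $\frac{1}{\epsilon}$ on the $C_w$ side, and the reduction to Lemma~\ref{lem:alphadegenerate} with $\alpha=\frac{1}{\epsilon}$. But the key step --- ruling out a sub-multigraph $D'$ induced by some $B'\subseteq C_w$ and $A'=N(B')$ in which every $u\in A'$ has $d_{D'}(u)\geq \frac{1}{\epsilon}$ --- is missing, and the contradiction is not of the kind you are looking for. The paper's contradiction is with the \emph{minimality} of $G$, i.e.\ with $\chi^2_\ell(G)>k+1$: letting $S'_w$ (resp.\ $T'$) be the vertices of $S_w$ (resp.\ $T$) lying in components of $B'$, one colors $G\setminus(S'_w\cup T')$ by minimality. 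Every $v\in S'_w$ has all of its degree-$2$ neighbours and their second endpoints inside $T'\cup S'_w$ (hence uncolored), and $d_{D'}(u)$ of the neighbours of its unique high-degree neighbour $u$ are uncolored vertices of $S'_w$; so $v$ retains at least $d_{D'}(u)$ available colors. Coloring $S'_w$ is then \emph{exactly} a list-edge-coloring of $D'$ in which each edge carries the actual availability list of its corresponding vertex: edges sharing an endpoint correspond either to two $S'_w$-vertices with common neighbour $u$, or to two $S'_w$-vertices in the same weak component, and both kinds must get distinct colors. Since $d_{D'}(C)\leq\frac{1}{\epsilon}\leq d_{D'}(u)$, every list has size at least $\max(d_{D'}(u),d_{D'}(C))$ and Theorem~\ref{th:bkw97} applies with these specific lists (this is where its full list strength is needed, not to produce a matching decomposition from uniform lists of size $\max(d(u),d(v))$). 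Finally each vertex of $T'$ has at most $2M\leq k$ constraints and is colored greedily, extending the coloring to $G$ --- the contradiction.

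The two endgames you sketch would not close the gap. Nothing forces the $\geq\frac{1}{\epsilon}$ edges at a vertex $u\in A'$ to land in one component: they may be spread over $\frac{1}{\epsilon}$ distinct elements of $C_w$, so no component's size is pushed up to $\frac{1}{\epsilon}$. And Configurations $(C_1)$--$(C_3)$ are local degree statements already exploited (via Lemma~\ref{lem:V1}) in building $V_1$; the density hypothesis on $D'$ does not produce any of them. As you note yourself, a plain edge count only yields $|A'|<|B'|$, which is harmless; the only route the paper offers from the structure of $D'$ to a contradiction is the coloring-extension (reducibility) argument above, and that idea is absent from your proposal.
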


\begin{proof}
Assume by contradiction that $|C_w|> \frac{1}{\epsilon}\times |\{v \in V \ | \ d(v) \geq k-M\}|$.

Recall that by Lemma~\ref{lem:V1}, every vertex of $S_w \subseteq V_1$ has a unique neighbor in $U$.
Let $D$ be the bipartite multigraph whose vertex set is $V(D)=U \cup C_w$, and whose edge set is in bijection with $S_w$: for every element $v \in S_w$, we add an edge $(u,w)$, where $u$ is the element of $U$ adjacent to $v$ and $w$ is the element of $C_w$ to which $v$ belongs.

For $A=\{v \in V \ | \ d(v) \geq k-M\}$, $B=C_w$ and $\alpha=\frac{1}{\epsilon}$, we have $|B|>\alpha|A|$. So by Lemma~\ref{lem:alphadegenerate}, there is a subset $C'_w$ of $C_w$ such that the subgraph $D'$ induced in $D$ by $C_w'$ and its neighbors $U'$ satisfies $\forall u \in U'$, $d_{D'}(u) \geq \frac{1}{\epsilon}$.

Let $S'_w$ (resp. $T'$) be the set of vertices of $S_w$ (resp. $T$) that belong to an element of $C'_w$.

We color by minimality $G \setminus (S'_w\cup T')$. Note that every vertex $v$ of $S_w'$, belonging to $V_2$, is adjacent to exactly one vertex $u$ of degree at least $k-M$, and that all its other neighbors $v_1,\ldots,v_{d(u)-1}$ are vertices of $T$ whose other neighbors $w_1,\ldots,w_{d(u)-1}$ are in $V_1$. Since the element $C$ of $C'_w \subseteq C_w$ to which $v$ belongs is a connected component of $G[V_1 \cup T]$, all the $v_i$'s and $w_i$'s belong to $C \in C'_w$. Consequently, for every $i$, we have $v_i \in T'$ and $w_i \in S'_w$. Thus $v$ has at most $k+1-d_{D'}(u)$ constraints, hence $v$ has at least $d_{D'}(u)$ colors available. To color the vertices of $S'_w$, it is sufficient to list-color the edges of $D'$, where every edge is assigned the same list of colors as the vertex of $S'_w$ it is in bijection with.

By assumption, every element of $C'_w$ contains at most $\frac{1}{\epsilon}$ vertices of $V_2$, so it has degree at most $\frac{1}{\epsilon}$ in $D$ thus in $D'$. Moreover, every vertex of $U'$ has degree at least $\frac{1}{\epsilon}$ in $D'$. Thus for every edge $(u,v)$ of $D'$, with $u \in U$ and $v \in C'_w$, we have $max(d_{D'}(u),d_{D'}(v))=d_{D'}(u)$. So $D'$ is a bipartite multigraph whose every edge has a list assignment of size at least $max(d_{D'}(u),d_{D'}(v))$. We apply Theorem~\ref{th:bkw97} to color the vertices of $S'_w$. It then remains to color the vertices of $T'$. These are vertices of degree $2$ whose both neighbors are in $S'_w$. But all the vertices of $S'_w$ are of degree at most $M$. So the vertices of $T'$ have at most $2 \times M \leq k$ constraints, and we can color the vertices of $T'$, a contradiction.
\end{proof}

\subsection{Discharging rules}\label{sect:dis}

Let $R_1$, $R_2$, $R_3$ and $R_g$ ('$g$' stands for 'global') be four discharging rules (see Figure~\ref{fig:rules}): for any vertex $x$,

\begin{itemize}
\item Rule $R_1$ is when $d(x)=2$ and its two neighbors $a$ and $b$ are such that $d(a)=2$ and $d(b) \geq M$, and the other neighbor $c$ of $a$ is not in $V_1$. Then $x$ gives $\frac{\epsilon}{2}$ to $a$.
\item Rule $R_2$ is when $3 \leq d(x) \leq M-1$ and $x \not\in V_1$. If $x$ has a neighbor $a$ of degree $2$ whose other neighbor is $y$,
\begin{itemize}
\item Rule $R_{2.1}$ is when $d(y)=2$. Then $x$ gives $1-\frac{3 \epsilon}{2}$ to $a$.
\item Rule $R_{2.2}$ is when $3 \leq d(y) < M$. If $y \not\in V_1$, then $x$ gives $\frac{1-\epsilon}{2}$ to $a$. If $y \in V_1$, then $x$ gives $1-\epsilon$ to $a$. 
\end{itemize}
\item Rule $R_3$ is when $M \leq d(x)$. Then $x$ gives $1-\frac{\epsilon}{2}$ to each of its neighbors.
\item Rule $R_g$ states that every vertex of degree at least $k-M$ gives an additional $\frac{1}{\epsilon}$ to an initially empty common pot, and every weak component of $G$ of size less than $\frac{1}{\epsilon}$ receives $1$ from this pot.
\end{itemize}

\captionsetup[subfloat]{labelformat=empty}
\begin{figure}[!h]
\centering
\subfloat[][$R_1$]{
\centering
\begin{tikzpicture}[scale=1.2]
\tikzstyle{whitenode}=[draw,circle,fill=white,minimum size=8pt,inner sep=0pt]
\tikzstyle{blacknode}=[draw,circle,fill=black,minimum size=6pt,inner sep=0pt]
\tikzstyle{tnode}=[draw,ellipse,fill=white,minimum size=8pt,inner sep=0pt]
\tikzstyle{texte} =[fill=white, text=black]
\draw (0,0) node[tnode] (b) [label=90:$b$] {\small{$M^+$}}
-- ++(-90:1cm) node[blacknode] (x) [label=right:$x$] {}
-- ++(-90:1cm) node[blacknode] (a) [label=right:$a$] {}
-- ++(-90:1cm) node[whitenode] (y) [label=-90:$c$][label=right:$\neg V_1$] {};

\draw (x) edge [post,bend right]  node [label=left:$\frac{\epsilon}{2}$] {} (a);
\end{tikzpicture}
\label{fig:r1}
}
\qquad
\subfloat[][$R_{2.1}$]{
\centering
\begin{tikzpicture}[scale=1.1]
\tikzstyle{whitenode}=[draw,circle,fill=white,minimum size=8pt,inner sep=0pt]
\tikzstyle{blacknode}=[draw,circle,fill=black,minimum size=6pt,inner sep=0pt]
\tikzstyle{tnode}=[draw,ellipse,fill=white,minimum size=8pt,inner sep=0pt]
\tikzstyle{texte} =[fill=white, text=black] 

\draw (0,0) node[tnode] (x) [label=90:$x$][label=right:$\neg V_1$] {\small{
\Large\textcolor{white}{$\frac{\textcolor{black}{3^+}}{\textcolor{black}{(M-1)^-}}$}}}
-- ++(-90:1cm) node[blacknode] (a) [label=right:$a$] {}
-- ++(-90:1cm) node[blacknode] (y) [label=right:$y$] {}
-- ++(-90:1cm) node[whitenode] (z) {};

\draw (x) edge [post,bend right]  node [label=left:$1-\frac{3\epsilon}{2}$] {} (a);
\end{tikzpicture}
\label{fig:r2}
}
\subfloat[][$R_{2.2}$]{
\centering
\begin{tikzpicture}[scale=1.1]
\tikzstyle{whitenode}=[draw,circle,fill=white,minimum size=8pt,inner sep=0pt]
\tikzstyle{blacknode}=[draw,circle,fill=black,minimum size=6pt,inner sep=0pt]
\tikzstyle{tnode}=[draw,ellipse,fill=white,minimum size=8pt,inner sep=0pt]
\tikzstyle{texte} =[fill=white, text=black]
\draw (0,0) node[tnode] (x) [label=90:$x$][label=right:$\neg V_1$] {\small{
\Large\textcolor{white}{$\frac{\textcolor{black}{3^+}}{\textcolor{black}{(M-1)^-}}$}}}
-- ++(-90:1cm) node[blacknode] (a) [label=right:$a$] {}
-- ++(-90:1cm) node[tnode] (y) [label=-90:$y$][label=right:$\neg V_1$] {\small{\Large\textcolor{white}{$\frac{\textcolor{black}{3^+}}{\textcolor{black}{(M-1)^-}}$}}}
;

\draw (x) edge [post,bend right]  node [label=left:$\frac{1-\epsilon}{2}$] {} (a);

\draw (3.6,0) node[tnode] (x) [label=90:$x$][label=right:$\neg V_1$] {\small{
\Large\textcolor{white}{$\frac{\textcolor{black}{3^+}}{\textcolor{black}{(M-1)^-}}$}}}
-- ++(-90:1cm) node[blacknode] (a) [label=right:$a$] {}
-- ++(-90:1cm) node[tnode] (y) [label=-90:$y$][label=right:$V_1$] {\small{
\Large\textcolor{white}{$\frac{\textcolor{black}{3^+}}{\textcolor{black}{(M-1)^-}}$}}}
;

\draw (x) edge [post,bend right]  node [label=left:$1-\epsilon$] {} (a);
\end{tikzpicture}
\label{fig:r3}
}
\qquad
\subfloat[][$R_3$]{
\centering
\begin{tikzpicture}[scale=1.15]
\tikzstyle{whitenode}=[draw,circle,fill=white,minimum size=8pt,inner sep=0pt]
\tikzstyle{blacknode}=[draw,circle,fill=black,minimum size=6pt,inner sep=0pt]
\tikzstyle{tnode}=[draw,ellipse,fill=white,minimum size=8pt,inner sep=0pt]
\tikzstyle{texte} =[fill=white, text=black]

\draw (0,0) node[tnode] (x) [label=90:$x$] {\small{$M^+$}}
-- ++(-90:1cm) node[whitenode] (a) {};

\draw[white] (0,-2) node[tnode] (y) [label=90:$x$] {};

\draw (x) edge [post,bend right]  node [label=left:$1-\frac{\epsilon}{2}$] {} (a);
\end{tikzpicture}
\label{fig:r4}
}
\caption{Discharging rules $R_1$, $R_2$, and $R_3$ for Theorem~\ref{thm:m145}.}
\label{fig:rules}
\end{figure}
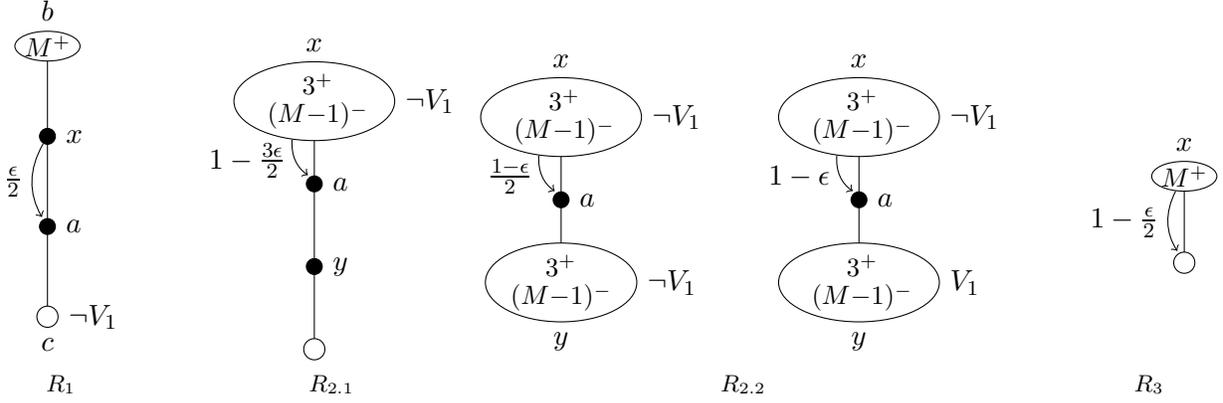
\captionsetup[subfloat]{labelformat=parens}

We use these discharging rules to prove the following lemma:

\begin{lemma}\label{lem:rules}
Graph $G$ satisfies $\mad(G) \geq 3-\epsilon$.
\end{lemma}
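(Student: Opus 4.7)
The plan is to run a standard discharging argument. I assign the initial charge $\mu(v) = d(v)$ to each vertex, noting that $\sum_v \mu(v) = 2|E(G)|$. Applying rules $R_1$, $R_2$, $R_3$ preserves the total vertex charge, while $R_g$ moves $\frac{1}{\epsilon}$ out of each vertex of degree $\geq k-M$ into a pot that pays $1$ back to each weak component of size $<\frac{1}{\epsilon}$. By Lemma~\ref{lem:Cm} the pot is a net receiver, so the total vertex charge after discharging is at most $2|E(G)|$. Hence, to conclude $\mad(G) \geq \ad(G) \geq 3-\epsilon$, it suffices to show that the aggregate final vertex charge is at least $(3-\epsilon)|V(G)|$, which I would verify vertex-by-vertex outside $G[V_1\cup T]$ and component-by-component inside.

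For a vertex $v$ outside $G[V_1\cup T]$, I would split by degree. If $d(v)\geq M$, then $v$ donates $1-\frac{\epsilon}{2}$ per neighbor via $R_3$ and possibly $\frac{1}{\epsilon}$ to the pot; the choices $M=6/\epsilon$ and $k\geq 3/\epsilon^2$ are calibrated so that the residual $d(v)\cdot\frac{\epsilon}{2}-\frac{1}{\epsilon}$ is at least $3-\epsilon$. If $d(v)=2$ and $v\notin T$, at least one neighbor of $v$ is outside $V_1$: a high-degree neighbor (if any) supplies an $R_3$ gift, and otherwise configuration \textbf{($C_2$)} rules out two cascading degree-$2$ vertices without a nearby high-degree partner, while the $R_2$ gifts from the neighbors (and possibly a tiny $R_1$ correction) are calibrated to reach exactly $3-\epsilon$. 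If $3\leq d(v)\leq M-1$ and $v\notin V_1$, then $v$ has at most $d(v)-2$ degree-$2$ neighbors with other endpoint in $V_1$ (else $v\in V_1$); when the number of degree-$2$ neighbors of $v$ whose other endpoint has degree $\leq M$ reaches $d(v)-2$, configuration \textbf{($C_3$)} forces the two remaining direct neighbors of $v$ to have degree sum $\geq k-M+3$, so at least one has degree $\geq M$ and supplies $v$ with an $R_3$ gift of $1-\frac{\epsilon}{2}$ that compensates the $R_2$ loss; otherwise the loss is directly bounded.

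For each connected component $C$ of $G[V_1\cup T]$, let $n_1=|(V_1\setminus V_2)\cap C|$, $n_2=|V_2\cap C|$ and $n_T=|T\cap C|$. By Lemma~\ref{lem:V1}, $C$ is bipartite between $V_1\cap C$ and $T\cap C$, and each vertex of $V_1\cap C$ collects $1-\frac{\epsilon}{2}$ from its unique high-degree neighbor outside $C$ via $R_3$; the only other charge flow crossing the boundary of $C$ is that each degree-$2$ vertex of $V_1\setminus V_2$ sheds $\frac{\epsilon}{2}$ via $R_1$ to its boundary degree-$2$ neighbor. Connectedness of $C$ gives $n_T\geq n_1+n_2-1$, and a short calculation then shows that the aggregate final charge on $C$, augmented by the pot's $1$ when $C$ is weak of size $<1/\epsilon$, is at least $(3-\epsilon)(n_1+n_2+n_T)$: weak components of size $\geq 1/\epsilon$ use $n_2\geq 1/\epsilon$ to buffer the deficit without the pot, and non-weak components use $n_1\geq 1$ together with $n_T\geq n_1+n_2-1$.

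The main obstacle is the degree-$\in[3,M-1]$, $v\notin V_1$ subcase: the potential loss of nearly $(d(v)-1)(1-\epsilon)$ via $R_2$ must be balanced against $R_3$ gifts, and the three amounts $1-\frac{3\epsilon}{2}$, $\frac{1-\epsilon}{2}$ and $1-\epsilon$ in $R_2$ are precisely tuned so that the structural consequence of $v\notin V_1$, combined with the forbidden configuration \textbf{($C_3$)}, brings $v$'s final charge to exactly $3-\epsilon$.
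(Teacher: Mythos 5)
Your outline reproduces the paper's argument: same (shifted) charge $d(v)-3+\epsilon$ versus $d(v)$, same rules $R_1$--$R_3$ and $R_g$ with Lemma~\ref{lem:Cm} validating the pot, the same split into a vertex-by-vertex check outside $G[V_1\cup T]$ and a component-by-component check inside, and the same ingredients for the vertex cases (the $R_3$ gift forced by \textbf{($C_3$)} when $q\geq d(x)-2$, the cap $p\leq d(x)-2$ coming from $x\notin V_1$, and the calibration of $M=\frac{6}{\epsilon}$, $k\geq\frac{3}{\epsilon^2}$ for the high-degree cases). Those parts, though sketched, would expand into the paper's computations.

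There is, however, a genuine gap in your treatment of the components of $G[V_1\cup T]$ that are not weak. With the facts you invoke -- each vertex of $V_1\cap C$ receives $1-\frac{\epsilon}{2}$ and has one neighbor outside $C$, $n_T\geq n_1+n_2-1$, no outflow, and $n_1\geq 1$ -- the best one gets (in the paper's normalization, writing $s=n_1+n_2$) is a final weight of at least $n_T(1+\epsilon)+s(\frac{\epsilon}{2}-1)\geq \frac{3\epsilon s}{2}-1-\epsilon$, which is negative for small $s$ (e.g.\ $s=2$, $n_T=1$ gives $2\epsilon-1<0$), and a non-weak component receives nothing from the pot. Merely knowing $n_1\geq 1$ does not close this deficit of essentially $1$; the missing idea is structural, not numerical: a vertex $v\in(V_1\setminus V_2)\cap C$ has, besides its unique neighbor of degree at least $k-M$, a degree-$2$ neighbor whose other endpoint is not in $V_1$, hence that neighbor is not in $T$ and lies outside $C$, so $d_{N(C)}(v)\geq 2$. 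This extra boundary edge adds $+1$ to $\sum_{v\in V_1\cap C}d(v)$ (the paper's $|N(C)\setminus U|$ term) and turns the bound into $\frac{3\epsilon s}{2}-\epsilon>0$. Relatedly, your boundary accounting ("each degree-$2$ vertex of $V_1\setminus V_2$ sheds $\frac{\epsilon}{2}$ via $R_1$") needs the observation that such a vertex is isolated in $G[V_1\cup T]$, so this outflow occurs only in the size-one components (where it balances exactly to $0$); for $s\geq 2$ one must argue, as the paper does, that no vertex of $C$ gives any weight at all.
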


\begin{proof}
We attribute to each vertex $v$ a weight equal to $d(v)-3+\epsilon$, and apply discharging rules $R_1$, $R_2$, $R_3$ and $R_g$. We show that all the vertices of $G \setminus (T \cup V_1)$ have a non-negative weight in the end, and that each connected component of $G[T \cup V_1]$ has a non-negative total weight.

By Lemma~\ref{lem:Cm}, the common pot has a non-negative value, and Rule $R_g$ is valid.

Let $x$ be a vertex of $G \setminus (T \cup V_1)$. By Configuration \textbf{($C_1$)}, we have $d(x) \geq 2$.
\begin{enumerate}
\item $d(x)=2$.\\Vertex $x$ has an initial weight of $-1+\epsilon$. We prove that it receives at least $1-\epsilon$. Let $u_1$ and $u_2$ be its two neighbors. We consider two cases depending on whether one of them is of degree at least $M$.
\begin{enumerate}
\item \textit{$d(u_1) \geq M$ or $d(u_2) \geq M$}.\\Consider w.l.o.g. that $d(u_1)\geq M$. By $R_3$, vertex $u_1$ gives $1-\frac{\epsilon}{2}$ to $x$. Vertex $x$ gives at most $\frac{\epsilon}{2}$ to $u_2$ by $R_1$. So $x$ receives at least $1-\epsilon$.
\item \textit{$d(u_1)< M$ and $d(u_2) < M$}.\\Assume that $u_1$ or $u_2$ is of degree $2$. Consider w.l.o.g. that $d(u_1)=2$. Then $u_1$ belongs to $V_1$ by definition, and the other neighbor of $u_1$ is of degree at least $M$. Since $u_1 \in V_1$ and $x \not\in T$, then $u_2 \not\in V_1$ and we have $M \geq d(u_2) \geq 3$. By $R_{1}$ and $R_{2.1}$, vertex $u_1$ gives $\frac{\epsilon}{2}$ to $x$, and $u_2$ gives $1-\frac{3\epsilon}{2}$. So $x$ receives $1-\epsilon$ and gives no weight away.
If both $u_1$ and $u_2$ have degree at least three, then since $x \not\in T$, at most one of $u_1$ and $u_2$ is in $V_1$ and $R_{2.2}$ applies. So vertices $u_1$ and $u_2$ give a total of $1-\epsilon$ to $x$, and $x$ gives no weight away.
\end{enumerate}
\item $3 \leq d(x)\leq M-1$.\\Vertex $x$ has an initial weight of $d(x)-3+\epsilon \geq \epsilon$. Let $u_1, \ldots, u_q$ denote its neighbors of degree $2$ whose other neighbor is of degree at most $M-1$, where $u_1, \ldots, u_p$ denote its neighbors of degree $2$ whose other neighbor belongs to $V_1$ (note that $p$ may be equal to $0$ when $x$ has no such neighbor, and that $q$ may be equal to $p$). We consider two cases depending on $q$.
\begin{enumerate}
\item $q \leq d(x)-3$.\\Then $x$ gives at most $(d(x)-3)\times(1-\epsilon) \leq d(x)-3+\epsilon$ by $R_2$.
\item $q \geq d(x)-2$.\\Then, by Configuration \textbf{($C_3$)}, vertex $x$ has a neighbor $v$ with $d(v) \geq \frac{k-M+2}{2}\geq M$ (recall that $k \geq 3\times M$). By Rule $R_3$, vertex $x$ receives $1-\frac{\epsilon}{2}$ from $v$. We consider two cases depending on $p$.
\begin{enumerate}
\item $p \leq d(x)-3$. By Rule $R_2$, $x$ gives at most $(d(x)-3)\times(1-\epsilon)+ 2 \times \frac{1-\epsilon}{2} \leq d(x)-3+\epsilon+(1-\frac{\epsilon}{2})$.
\item $p \geq d(x)-2$. Since $x \not\in V_1$, we have $p=q=d(x)-2$. By Rule $R_2$, $x$ gives at most $(d(x)-2)\times(1-\epsilon) \leq d(x)-3+\epsilon+(1-\frac{\epsilon}{2})$.
\end{enumerate}
\end{enumerate}
\item $M \leq d(x) \leq k-M-1$.\\By Rule $R_3$, vertex $x$ gives at most $d(x)\times(1-\frac{\epsilon}{2})$. Since $M=\frac{6}{\epsilon}$, we have $d(x) \times \frac{\epsilon}{2} \geq 3 \geq 3 - \epsilon$, so $x$ has a non-negative final weight.
\item $k-M \leq d(x)$.\\By Rules $R_3$ and $R_g$, vertex $x$ gives at most $\frac{1}{\epsilon}+d(x)\times(1-\frac{\epsilon}{2})$. Since $k\geq \frac{3}{\epsilon^2}$, $M=\frac{6}{\epsilon}$ and $\epsilon \leq \frac{1}{20}$, we have $d(x) \times \frac{\epsilon}{2} \geq (\frac{3}{\epsilon^2}-\frac{6}{\epsilon})\times \frac{\epsilon}{2} = \frac{3}{2\epsilon} - 3 \geq \frac{1}{\epsilon}+10-3 \geq \frac{1}{\epsilon}+3 - \epsilon$, so $x$ has a non-negative final weight.
\end{enumerate}

Therefore, every vertex of $G \setminus (T \cup V_1)$ has a non-negative final weight. It remains to consider vertices of $G[T \cup V_1]$. Let $C$ be a connected component of $G[T \cup V_1]$. Let $s$ be the size of $C$. Note that $s \geq 1$.

If $s=1$, then $C$ consists of a single vertex $u$ of degree $2$ in $G$ and that is adjacent to a vertex $v$ of degree at least $k-M$ and $1$-linked to a vertex $w$ of degree less than $M$. Thus, by $R_1$ and $R_3$, vertex $u$ has an initial weight of $-1+\epsilon$, receives $1-\frac{\epsilon}{2}$ from $v$, and gives $\frac{\epsilon}{2}$ to its neighbor of degree $2$: it has a final weight of $0$. We assume from now on that $s \geq 2$.

No vertex of $C$ gives weight. Indeed, only $R_1$ can apply on a vertex of $C$ since all the others apply on vertices of degree at least $M$ or on vertices of degree at least $3$ but not in $V_1$. 
If $R_1$ applies on a vertex $x$ of $C$, then $d(x)=2$ and its two neighbors $a$ and $b$ are such that $d(b)\geq M$ and $d(a)=2$, where the other neighbor $c$ of $a$ is not in $V_1$. Since $d(b) \geq M$, we have $b \not\in V_1$ and $x \not\in T$, so $x \in V_1$. Then, since $V_1$ is a stable set by Lemma~\ref{lem:V1}, we have $a \not\in V_1$, and $s=1$, a contradiction with our assumption.

Every vertex $u$ in $C \cap V_1$ has exactly one neighbor of degree at least $k-M$ (not in $C$) and receives $1-\frac{\epsilon}{2}$ from it. Thus the weight $W$ of $C$ (without taking $R_g$ into account) is as follows. We denote by $N(C)$ the set of vertices that do not belong to $C$ but are adjacent to a vertex in $C$.

\begin{align*}
W & \geq \sum_{v\in C}(d(v)-3+\epsilon)+s\times(1-\frac{\epsilon}{2})\\
& \geq \sum_{v\in T \cap C}(d(v)-3+\epsilon)+\sum_{v\in V_1 \cap C}(d(v)-3+\epsilon)+s\times(1-\frac{\epsilon}{2})\\
& \geq \sum_{v\in T \cap C}(-1+\epsilon)+\sum_{v\in V_1 \cap C}d_C(v)+\sum_{v\in V_1 \cap C}d_{N(C)}(v)+\sum_{v\in V_1 \cap C}(-3+\epsilon)+s\times(1-\frac{\epsilon}{2})\\
& \geq \sum_{v\in T \cap C}(-1+\epsilon)+\sum_{v\in V_1 \cap C}d_C(v)+(s+|N(C) \setminus U|)+s\times(-3+\epsilon)+s\times(1-\frac{\epsilon}{2})
\end{align*}
Remember that the vertex set of $C$ is the union of $V_1 \cap C$ and $T \cap C$, which are stable sets. Also, the two neighbors of a vertex in $T$ belong to $C$, so $\sum_{v\in V_1 \cap C}d_C(v)=\sum_{v\in T \cap C}d_C(v)=2|T\cap C|$. Since $C$ is a connected component, we have $|T\cap C| \geq |V\cap C|-1=s-1$. Then,
\begin{align*}
W & \geq|T\cap C|\times(-1+\epsilon)+2|T\cap C|+|N(C) \setminus U|+s\times(-1+\frac{\epsilon}{2})\\
& \geq(-1-\epsilon+\frac{3 \epsilon s}{2})+|N(C) \setminus U|
\end{align*}

We consider three cases depending on whether $C$ is weak and $s < \frac{1}{\epsilon}$.
\begin{enumerate}
\item \emph{$C$ is a weak component of $G$ and $s < \frac{1}{\epsilon}$}.\\By $R_g$, component $C$ receives an extra weight of $1$. Thus, it has a final weight of $1+W\geq 1+(-1-\epsilon+\frac{3 \epsilon s}{2})= -\epsilon+\frac{3 \epsilon s}{2} > 0$.
\item \emph{$C$ is a weak component of $G$ and $s \geq \frac{1}{\epsilon}$}.\\Then $C$ has a final weight of $W \geq -1-\epsilon+\frac{3 \epsilon s}{2}\geq -1 -\epsilon +\frac{3 \epsilon}{2 \times \epsilon}\geq 0$.
\item \emph{$C$ is not a weak component of $G$}.\\There is at least a vertex $v$ in $(V_1 \cap C) \setminus V_2$, so $v$ has a neighbor $x$ of degree $2$ whose other neighbor is not in $V_1$. Then $x$ is not in $T$, thus not in $C$. So $x\in N(C)\setminus U$. Then the final weight of $C$ is $W\geq (-1-\epsilon+\frac{3 \epsilon s}{2})+1\geq 0$.
\end{enumerate}

Consequently, after application of the discharging rules, every vertex $v$ of $G\setminus\{V_1 \cup T\}$ has a non-negative final weight, and every connected component $C$ of $G[V_1 \cup T]$ has a non-negative final total weight, meaning that $\sum_{v \in G} (d(v)-3+\epsilon) \geq 0$. Therefore, $\mad(G) \geq 3- \epsilon$. This completes the proof of Lemma~\ref{lem:rules}, and thus of Theorem~\ref{thm:m145}.

\end{proof}

\section{List injective coloring}\label{sect:inj}

A \emph{list injective k-coloring} of a graph is a (not necessarily proper) list k-coloring of its vertices such that two vertices with a common neighbor are of different color, or, in other words, such that no vertex has two neighbors with the same color.
Note that the proof for Theorem~\ref{thm:m145} also work, with close to no alteration, for list injective coloring with one color less. Indeed, the discharging part does not depend on the problem considered, and the configuration part can easily be checked to work also for this as, though one less color is available, every critical vertex has at least one less constraint since already colored neighbors do not count anymore. There is no reason to think that this would be the case for any discharging proof about list coloring of the square, but it happens to be the case most often. 

We thus obtain the following theorem.

\begin{theorem}\label{thm:inj}
There exists a function $f$ such that for any $\epsilon>0$, every graph $G$ with $mad(G)<3-\epsilon$ and $\Delta(G) \geq f(\epsilon)$ satisfies $\chi^i_\ell(G)= \Delta(G)$.
\end{theorem}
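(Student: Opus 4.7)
The plan is to replay the entire proof of Theorem~\ref{thm:m145} with one color less, keeping the same constants $f(\epsilon)=\frac{3}{\epsilon^2}$ and $M=\frac{6}{\epsilon}$. Assume by contradiction that $\Gamma$ is a counterexample and take a minimal subgraph $G$ with $\Delta(G)\leq k:=\Delta(\Gamma)$ and $\chi^i_\ell(G)>k$, while the square of every proper subgraph admits a list injective $k$-coloring. As before, the goal is to derive $\mad(G)\geq 3-\epsilon$, contradicting $G\subseteq\Gamma$.

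The key observation for reducibility is that for injective coloring only already-colored vertices at distance exactly $2$ in $G$ constrain the color of an uncolored vertex; a colored direct neighbour does not, since it is allowed to share its color with the vertex being colored. Thus, in each of the three forbidden configurations $(C_1)$, $(C_2)$, $(C_3)$, every direct colored neighbour that was counted as a constraint in the proof of Lemma~\ref{lem:config} now vanishes, which precisely offsets the loss of one color. Concretely, for $(C_1)$ the count $d(u)\leq k$ becomes $d(u)-1\leq k-1$; for $(C_2)$, extending to $u_1$ only requires $d(w_1)-1+|\{w_2\}|\leq k-1$ constraints and extending to $u_2$ requires $d(w_2)-1+|\{w_1\}|\leq k-2$; for $(C_3)$, recoloring $u$ after deleting the $v_i$'s costs $d(x)+d(y)-2+(d(u)-2)\leq k-2$ injective constraints, and each $v_i$ costs $(d(u)-1)+(d(w_i)-1)\leq 2M-2\leq k$. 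All three configurations are therefore still forbidden.

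For the global structure, Lemma~\ref{lem:V1} depends only on the forbidden configurations, so it transfers verbatim. For Lemma~\ref{lem:Cm}, I would rerun exactly the same argument: after coloring $G\setminus(S'_w\cup T')$ by minimality, each $v\in S'_w$ has as its only colored constraints the neighbours of its unique high-degree neighbour $u$ that do not belong to $S'_w$, for a total of at most $d(u)-d_{D'}(u)$; with $k$ available colors at least $d_{D'}(u)$ remain, so Theorem~\ref{th:bkw97} applies to the bipartite multigraph $D'$ exactly as in the original proof. Finally, each $t\in T'$ has at most $(d(a)-1)+(d(b)-1)\leq 2M-2\leq k$ injective constraints through its two neighbours in $S'_w$, so it can be colored.

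The discharging argument (Lemma~\ref{lem:rules}) is purely structural: it derives $\mad(G)\geq 3-\epsilon$ from the absence of the three configurations and from Lemma~\ref{lem:Cm}, without referring to the coloring problem at all, so it carries through unchanged. I do not expect a serious obstacle; the only delicate point is the constraint bookkeeping in the reducibility step, which is why the authors describe the adaptation as requiring almost no alteration.
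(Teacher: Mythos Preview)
Your proposal is correct and follows exactly the paper's own approach: the paper simply notes that the proof of Theorem~\ref{thm:m145} transfers with one color less because every critical vertex loses at least one constraint (its colored direct neighbours no longer count), and you have spelled this out in detail for each of $(C_1)$--$(C_3)$ and for Lemma~\ref{lem:Cm}. The only slip is notational---in the $(C_1)$ case ``$d(u)\leq k$ becomes $d(u)-1\leq k-1$'' should refer to the constraint count (which is at most $d(v)\leq k$ for the unique neighbour $v$), not to $d(u)\leq 1$---but this does not affect the argument.
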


Theorem~\ref{thm:inj} is optimal in the same sense as Theorem~\ref{thm:m145} by the graph family described in Figure~\ref{fig:mad3}.

\section{Conclusion}\label{sect:concl}

For any $C \geq 1$, we asked for the supremum $M(C)$ such that any graph $G$ with $\mad(G) < M(C)$ and sufficiently large $\Delta(G)$ (depending only on $\mad(G)$) satisfies $\chi^2_\ell(G) \leq \D+C$. It was already known~\cite{blp13} that $lim_{C \rightarrow \infty}M(C)=4$. We proved here that $M(1)=3$, and conjectured that $M(C)=\frac{4C+2}{C+1}$.

It might be a good approach to try, for every fixed $C$, to adapt the same proof outline. However, rather than prove incremental results, it would be more interesting to look for a general proof that would work for every $C$, or maybe only for every large enough $C$.

The proof that $lim_{C \rightarrow \infty}M(C)=4$~\cite{blp13} is quite short and simple. However, finding the exact value of $M(C)$ might still be difficult for large $C$. Indeed, the proof does not even involve lower bounds on $\Delta(G)$.

This leads to a similar question (with no constraint on $\Delta(G)$).
\begin{question}
What is, for any $C \geq 1$, the maximum $m(C)$ such that any graph $G$ with $\mad(G) < m(C)$ satisfies $\chi^2_\ell(G) \leq \D+C$?
\end{question}

Obviously, we have $m(C)\leq M(C)<4$ for every $C$. Our proof that $lim_{C \rightarrow \infty}M(C)=4$ was actually a proof that $lim_{C \rightarrow \infty}m(C)=4$. Note that $m(1)=m(2)=2$, as a cycle $C_5$ of length five has $\mad(C_5)=2$, $\Delta(C_5)=2$ and $\chi^2(C_5)=5$, while any graph $G$ with $\mad(G)<2$ is a forest and thus satisfies $\chi^2_\ell(G)=\Delta(G)+1$. So there is a significant gap for $C=1,2$. What about general $C$? Does there exist some $C$ for which $m(C)=M(C)$?

\bibliographystyle{plain}

\end{document}